\newcommand{\D}{{\mathrm{d}}}
\newtheorem{theorem}{Theorem}
\newtheorem{proposition}{Proposition}
\newtheorem{corollary}{Corollary}
\begin{document}

\begin{frontmatter}

\title{Thermodynamics in the Limit of Irreversible Reactions}
\author{A.~N.~Gorban}
 \ead{ag153@le.ac.uk}
\address{Department of Mathematics, University of Leicester, Leicester, LE1 7RH, UK}
\author{E.~M.~Mirkes}
\address{Institute of Space and Information Technologies, Siberian Federal University,
Krasnoyarsk, Russia}
\author{G.~S.~Yablonsky}
\address{Parks College, Department of Chemistry, Saint Louis
University, Saint Louis, MO 63103, USA}

%\date{}

%\maketitle

\begin{abstract}
For many real physico-chemical complex systems detailed mechanism includes both
reversible and irreversible reactions. Such systems are typical in homogeneous combustion
and heterogeneous catalytic oxidation. Most complex enzyme reactions include irreversible
steps. The classical thermodynamics has no limit for irreversible reactions whereas the
kinetic equations may have such a limit. We represent the systems with irreversible
reactions as the limits of the fully reversible systems when some of the equilibrium
concentrations tend to zero. The structure of the limit reaction system crucially depends
on the relative rates of this tendency to zero. We study the dynamics of the limit system
and describe its limit behavior as $t \to \infty$. If the reversible systems obey the
principle of detailed balance then the limit system with some irreversible reactions must
satisfy the {\em extended principle of detailed balance}. It is formulated and proven in
the form of two conditions: (i) the reversible part satisfies the principle of detailed
balance and (ii) the convex hull of the stoichiometric vectors of the irreversible
reactions does not intersect the linear span of the stoichiometric vectors of the
reversible reactions. These conditions imply the existence of the global Lyapunov
functionals and alow an algebraic description of the limit behavior. The thermodynamic
theory of the irreversible limit of reversible reactions is illustrated by the analysis
of hydrogen combustion.
\end{abstract}
\begin{keyword}
entropy \sep free energy \sep reaction network \sep detailed balance \sep irreversibility
 \PACS
05.45.-a \sep 82.40.Qt \sep  82.20.-w \sep 82.60.Hc
\end{keyword}

\end{frontmatter}

\section{Introduction\label{sec1}}

\subsection{The problem: non-existence  of thermodynamic functions in the limit of irreversible reactions}

We consider a homogeneous chemical system with $n$ components $A_i$, the
concentration of $A_i$ is $c_i \geq 0$, the amount of $A_i$ in the system is
$N_i\geq 0$, $V$ is the volume,  $N_i=Vc_i$, $T$ is the temperature. The $n$
dimensional vectors $c=(c_i)$ and $N=(N_i)$ belong to the closed positive
orthant $\mathbb{R}^n_+$ in $\mathbb{R}^n$. ($\mathbb{R}^n_+$ is the set of all
vectors $x\in \mathbb{R}^n$ such that $x_i\geq 0$ for all $i$.)

{\em The classical thermodynamics has no limit for irreversible reactions whereas the
kinetic equations have.} For example, let us consider a simple cycle
$$A_1\underset{k_{-1}}{ \overset{k_{1}}{\rightleftharpoons}}  A_2
\underset{k_{-2}}{ \overset{k_{2}}{\rightleftharpoons}} A_3 \underset{k_{-3}}{
\overset{k_{3}}{\rightleftharpoons}} A_1$$ with the equilibrium concentrations
$c^{\rm eq}=(c_1^{\rm eq},c_2^{\rm eq},c_3^{\rm eq})$ and the detailed balance
conditions:
$$k_i c^{\rm eq}_i=k_{-i} c^{\rm eq}_{i+1}$$
under the standard cyclic convention, here, $A_{3+1}=A_1$ and $c_{3+1}=c_1$.
The perfect free energy has the form
$$F= \sum_i RTV c_i \left(\ln\left(\frac{c_i}{c_i^{\rm eq}}\right)-1\right) +
const\, .$$

Let the equilibrium concentration $c_1^{\rm eq} \to 0$ for the fixed values of
$c_{2,3}^{\rm eq}>0$. This means that
$$\frac{k_{-1}}{k_1}=\frac{c_1^{\rm eq}}{c_2^{\rm eq}}\to 0 \mbox{
 and  } \frac{k_{3}}{k_{-3}}=\frac{c_1^{\rm eq}}{c_3^{\rm eq}}\to 0 \, .$$
Let us take the fixed values of the rate constants $k_1$, $k_{\pm 2}$ and
$k_{-3}$. Then the limit kinetic system exists and has the form:
$$A_1{ \overset{k_{1}}{\rightarrow}}  A_2
\underset{k_{-2}}{ \overset{k_{2}}{\rightleftharpoons}} A_3
\underset{k_{-3}}{\leftarrow} A_1 \, .$$ It is a routine task to write a first
order kinetic equation for this scheme. At the same time, the free energy
function  $F$ has no limit: it tends to $\infty$ for any positive vector of
concentrations because the term $c_1 \ln({c_1}/{c_1^{\rm eq}})$ increases to
$\infty$. The free energy cannot be normalized by adding a constant term
because the variation of the term $c_1 \ln({c_1}/{c_1^{\rm eq}})$ on an
interval $[0,\overline{c}]$ with fixed $\overline{c}$ also increases to
$\infty$, it varies from $-c^{\rm eq}_1/e$ (for the minimizer, $c_1={c_1^{\rm
eq}}/e$) to a large number $\overline{c}(\ln \overline{c}-\ln {c_1^{\rm eq}}) $
(for $c_1=\overline{c}$).

The logarithmic singularity is rather ``soft" and does not cause a real
physical problem because even for ${c_1^{\rm eq}}/{c_1}=10^{-10}$ the
corresponding large term in the free energy will be just $\sim 23RT$ per mole.
Nevertheless, the absence of the limit causes some mathematical questions. For
example, the density,
\begin{equation}\label{perfectFE}
f=F/(RTV)=\sum_i c_i (\ln (c_i/c_i^{\rm eq})-1) \, ,
\end{equation}
is a Lyapunov function for a system of chemical kinetics for a perfect mixture
with detailed balance under isochoric isothermal conditions. Here, $c_i$ is the
concentration of the $i$th component and $c_i^{\rm eq}$ is its equilibrium
concentration for a selected value of the linear conservation laws, the
so-called ``reference equilibrium".

This function is used for analysis of stability, existence and uniqueness of
chemical equilibrium since the work of Zeldovich (1938, reprinted in 1996
\cite{Zeld}). Detailed analysis of the connections between detailed balance and
the free energy function was provided in \cite{ShapiroShapley1965}. Perhaps,
the first detailed proof that  $f$ is a Lyapunov function for chemical kinetics
of perfect systems with detailed balance was published in 1975
\cite{VolpertKhudyaev1985}. Of course, it does not differ significantly from
the Boltzman's proof of his $H$-theorem (1873 \cite{Boltzmann}).

For the irreversible systems which are obtained as limits of the systems with
detailed balance, we should expect the preservation of stability of the
equilibrium. Moreover, one can expect existence of the Lyapunov functions which
are as universal as the thermodynamic functions are. The ``universality" means
that these functions depend on the list of components and on the equilibrium
concentrations but do not depend on the reaction rate constants directly.

The thermodynamic potential of a component $A_i$ cannot be defined in the irreversible
limits when the equilibrium concentration of $A_i$ tends to 0. Nevertheless, in this
paper, we construct the universal Lyapunov functions for systems with some irreversible
reactions. Instead of detailed balance we use the weaker assumption that these systems
can be obtained from the systems with detailed balance when some constants tend to zero.

\subsection{The extended form of detailed balance conditions for systems with irreversible reactions}

Let us consider a reaction mechanism in the form of the system of
stoichiometric equations
\begin{equation}\label{ReactMech}
\sum_i \alpha_{ri} A_i \to \sum_j \beta_{rj} A_j \;\; (r=1, \ldots, m) \, ,
\end{equation}
where  $\alpha_{ri}\geq 0$, $\beta_{rj}\geq 0$ are the stoichiometric coefficients. The
reverse reactions with positive rate constants are included in the list (\ref{ReactMech})
separately (if they exist). The stoichiometric vector $\gamma_r$ of the elementary
reaction is $\gamma_r=(\gamma_{ri})$, $\gamma_{ri}=\beta_{ri}-\alpha_{ri}$. We always
assume that there exists a strictly positive conservation law, a vector $b=(b_i)$,
$b_i>0$ and $\sum_i b_i \gamma_{ri}=0$ for all $r$. This may be the conservation of mass
or of total number of atoms, for example.

According to the {\em generalized mass action law}, the reaction rate for an elementary
reaction (\ref{ReactMech}) is (compare to Eqs. (4), (7), and (14) in \cite{Grmela2012}
and Eq. (4.10) in \cite{GiovangigliMatus2012})
\begin{equation}\label{GenMAL}
w_r=k_r \prod_{i=1}^n a_i^{\alpha_{ri}} \, ,
\end{equation}
where $a_i\geq 0$ is the {\em activity} of $A_i$,
\begin{equation}\label{StandardActivity}
a_i=\exp\left(\frac{\mu_i-\mu_i^0}{RT}\right)\, .
\end{equation}
Here, $\mu_i$ is the chemical potential and  $\mu_i^0$ is the standard chemical
potential of the component $A_i$.

This law has a long history (see
\cite{Feinberg1972_a,Yablonskiiatal1991,Grmela2010,GiovangigliMatus2012}). It was
invented in order to meet the thermodynamic restrictions on kinetics. For this purposes,
according to the principle of detailed balance, the rate of the reverse reaction is
defined by the same formula and its rate constant should be found from the detailed
balance condition at a given equilibrium.

It is worth mentioning that the free energy has no limit when some of the reaction
equilibrium constants tend to zero. For example, for the ideal gas the chemical potential
is $\mu_i(c,T)=RT\ln c_i+\mu _i^0(T)$. In the irreversible limit some $\mu_i^0 \to
\infty$. On the contrary, the activities remain finite (for the ideal gases $a_i=c_i$)
and the approach based on the generalized mass action law and the detailed balance
equations $w_r^+=w_r^-$ can be applied to find the irreversible limit.

The list (\ref{ReactMech}) includes reactions with the reaction rate constants
$k_r>0$. For each $r$ we define $k_r^+=k_r$, $w_r^+=w_r$, $k_r^-$ is the
reaction rate constant for the reverse reaction if it is on the list
(\ref{ReactMech}) and 0 if it is not, $w_r^-$ is the reaction rate for the
reverse reaction if it is on the list (\ref{ReactMech}) and 0 if it is not. For
a reversible reaction, $K_r=k_r^+/k_r^-$

The principle of detailed balance for the generalized mass action law is: For
given values $k_r$ there exists a positive equilibrium $a_i^{\rm eq}>0$ with
detailed balance, $w_r^+=w_r^-$.

Recently, we have formulated the extended form of the detailed balance
conditions for the systems with some irreversible reactions
\cite{GorbYabCES2012}. This {\em extended principle of detailed balance} is
valid for all systems which obey the generalized mass action law and are the
limits of the systems with detailed balance when some of the reaction rate
constants tend to zero. It consists of two parts:
\begin{itemize}
\item{The {\em algebraic condition}: The principle of detailed balance is valid for the reversible
part. (This means that for the set of all reversible reactions there exists a
positive equilibrium where all the elementary reactions are equilibrated by
their reverse reactions.)}
\item{ The {\em structural condition}: The convex hull of the stoichiometric
vectors of the irreversible reactions has empty intersection with the linear
span of the stoichiometric vectors of the reversible reactions. (Physically,
this means that the irreversible reactions cannot be included in oriented
cyclic pathways.)}
\end{itemize}
Let us recall the formal convention: the linear span of empty set is $\{0\}$,
the convex hull of empty set is empty.

In our previous work \cite{GorbYabCES2012} we studied the systems with some irreversible
reactions which are the limits of the reversible systems with detailed balance. The
structural and algebraic conditions were found. The present paper is focused on the
dynamical consequences of these conditions. We prove that the attractors always consist
of the fixed points. These limit points (``partial equilibria") are situated on the faces
of the positive orthant of concentrations. These faces and the partial equilibria are
described in the paper.

\subsection{The structure of the paper}

In Sec.~\ref{Sec:Multiscale} we study the systems with detailed balance, their {\em
multiscale} limits and the limit systems which satisfy the extended principle of detailed
balance. The classical {\em Wegscheider identities} for the reaction rate constants are
presented. Their limits when some of the equilibria tend to zero give the extended
principle of detailed balance.

We use the {\em generalized mass action law} for the reaction rates. For the analysis of
equilibria for the general systems, the formulas with activities are the same as for the
ideal systems and it is convenient to work with activities unless we need to study
dynamics. The dynamical variables are amounts and concentrations. In a special
subsection~\ref{SecActConc} we discuss the relations between concentration and
activities, formulate the main assumptions and present formulas for the dissipation rate.

We introduce {\em attractors} of the systems with some irreversible reactions and study
them in Sec.~\ref{Sec:Attr}. It includes the central results of the paper. We fully
characterize the faces of the positive orthant that include $\omega$-limit sets. On such
a face, dynamics is completely degenerated (zero rates) or it is driven by a smaller
reversible system that obeys classical thermodynamics.

{\em Hydrogen combustion} is the most studied and very important gas reaction.
It serves as a main benchmark example for many studies of chemical kinetics.
This is already not a toy example but the complexity of this system is not
extremely high: in the usual models there are 6-8 components and $\sim$15-30
elementary reversible reactions. Under various conditions some of these
reactions are practically irreversible. We use this system as a benchmark in
Sec.~\ref{Sec:Hydro} and give an example of the correct separation of the
reactions into reversible and irreversible part. The limit behavior of this
system in time is described.

In Conclusion we briefly discuss the results with focus on the {\em unsolved problems}.

\section{Multiscale limit of a system with detailed balance \label{Sec:Multiscale}}
\subsection{Two classical approaches to the detailed balance condition}

There are two traditional approach to the description of the reversible systems
with detailed balance. First, we can start from the independent rate constants
of the elementary reactions and consider the solvability of the detailed
balance equations as the additional condition on the admissible values of the
rate constants. Here, for $m$ elementary reactions we have $m$ constants ($m$
should be an even number, $m=2\ell$, $\ell=m/2$) and some equations which
describe connections between these constants. This approach was introduced by
Wegscheider in 1901 \cite{Wegscheider1901} and developed further by many
authors \cite{ShuShu1989,Colquhoun2004}.

Secondly, we can select a ``forward" reaction in each pair of mutually reverse
elementary reactions. If a positive equilibrium is known then we can find the
reaction rate constants for the reverse reaction from the constants for forward
reaction and the detailed balance equations. Therefore,  the forward reaction
rate constants and a set of the equilibrium activities form the complete
description of the reaction. Here we have $\ell+n$ independent constants,
$\ell=m/2$ rate constants of forward reactions and $n$ (it is the number of
components) equilibrium activities. For these $\ell+n$ constants, the principle
of detailed balance produces no restrictions. This second approach is popular
in applied chemical thermodynamics and kinetics
\cite{PrigogineDefay1962,GorbanMirFGV1989,YangHlavacek2006} because it is
convenient to work with the independent parameters ``from scratch".

The Wegscheider conditions appear as the necessary and sufficient  conditions
of solvability of the detailed balance equations. (See, for example, the
textbook \cite{Yablonskiiatal1991}). Let us join the forward and reverse
elementary reactions and write
\begin{equation}\label{ReactMechRev}
\sum_i \alpha_{ri} A_i \rightleftharpoons \sum_j \beta_{rj} A_j \;\; (r=1,
\ldots, \ell) \, .
\end{equation}
The {\em stoichiometric matrix} is $\boldsymbol{\Gamma}=(\gamma_{ri})$,
$\gamma_{ri}=\beta_{ri}-\alpha_{ri}$ (gain minus loss). The {\em stoichiometric
vector} $\gamma_r$ is the $r$th row of $\boldsymbol{\Gamma}$ with coordinates
$\gamma_{ri}=\beta_{ri}-\alpha_{ri}$.

Both sides of the detailed balance equations, $w_r^+=w_r^-$, are positive for
positive activities. The solvability of this system for positive activities
means the solvability of the following system of linear equations:
\begin{equation}\label{DetBalGen}
\sum_i \gamma_{ri} x_i = \ln k_r^+-\ln k_r^-=\ln K_r \;\; (r=1, \ldots \ell )
\end{equation}
($x_i=\ln a_i^{\rm eq}$). Of course, we assume that if $k_r^+>0$ then $k_r^->0$
(reversibility) and the equilibrium constant $K_r> 0 $ is defined for all
reactions from (\ref{ReactMechRev}).

\begin{proposition}\label{Prop:Wegscheider}
The necessary and sufficient conditions for existence of the positive
equilibrium $a_i^{\rm eq}>0$ with detailed balance is: For any solution $
\boldsymbol{\lambda}=    (\lambda_r)$ of the system
\begin{equation}\label{lambdaGamma}
\boldsymbol{\lambda \Gamma} =0  \;\; \left(\mbox{i.e.}\;\; \sum_{r=1}^{\ell}
\lambda_r \gamma_{ri}=0\;\; \mbox{for all} \;\; i\right)
\end{equation}
the Wegscheider identity holds:
\begin{equation}\label{WegscheiderLambda}
\prod_{r=1}^{\ell}     (k_r^+)^{\lambda_r}=\prod_{r=1}^{\ell}
(k_r^-)^{\lambda_r} \, .
\end{equation}
It is sufficient to use in (\ref{WegscheiderLambda}) any basis of solutions of the system
(\ref{lambdaGamma}): $\boldsymbol{\lambda} \in \{\boldsymbol{\lambda}^1, \cdots ,
\boldsymbol{\lambda}^{q}\}$.
\end{proposition}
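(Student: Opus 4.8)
The plan is to reduce the multiplicative detailed balance equations to a single linear system and then apply the Fredholm alternative. The first step, essentially already carried out in the excerpt, is to take logarithms: since both sides of $w_r^+=w_r^-$ are strictly positive for positive activities, detailed balance at a positive equilibrium $a^{\rm eq}=(a_i^{\rm eq})$, $a_i^{\rm eq}>0$, is equivalent to the solvability of the linear system (\ref{DetBalGen}), $\sum_i \gamma_{ri} x_i=\ln K_r$, in the unknown $x=(x_i)$ with $x_i=\ln a_i^{\rm eq}$. The crucial point I would stress here is that $x$ ranges over all of $\mathbb{R}^n$ with no sign constraint whatsoever, because $a_i^{\rm eq}=e^{x_i}$ is automatically strictly positive for every real $x_i$, and $\ln$ is a bijection $\mathbb{R}_{>0}\to\mathbb{R}$. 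Hence ``a positive equilibrium with detailed balance exists'' is exactly ``the real linear system $\boldsymbol{\Gamma} x=\mathbf{b}$ is solvable,'' where $\mathbf{b}=(\ln K_r)$.

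Next I would invoke the Fredholm alternative for the real matrix $\boldsymbol{\Gamma}$: the system $\boldsymbol{\Gamma} x=\mathbf{b}$ admits a solution if and only if $\mathbf{b}$ is orthogonal to the left null space of $\boldsymbol{\Gamma}$, that is, $\sum_r \lambda_r \ln K_r=0$ for every $\boldsymbol{\lambda}$ with $\boldsymbol{\lambda\Gamma}=0$. This is just the standard fact that the image of $\boldsymbol{\Gamma}$ equals the orthogonal complement of $\ker\boldsymbol{\Gamma}^{\mathsf T}$. Translating this additive orthogonality condition back by exponentiation, $\sum_r \lambda_r \ln K_r=0$ becomes $\prod_r K_r^{\lambda_r}=1$, and since $K_r=k_r^+/k_r^-$ this is precisely the Wegscheider identity (\ref{WegscheiderLambda}). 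Necessity and sufficiency are thereby established simultaneously.

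For the final assertion that a basis suffices, I would simply observe that the scalar condition $\sum_r \lambda_r \ln K_r=0$ is linear in $\boldsymbol{\lambda}$. Consequently, if it holds for each element of a basis $\{\boldsymbol{\lambda}^1,\ldots,\boldsymbol{\lambda}^{q}\}$ of the solution space of (\ref{lambdaGamma}), it holds for every linear combination, hence for all $\boldsymbol{\lambda}$ satisfying $\boldsymbol{\lambda\Gamma}=0$; exponentiating once more recovers (\ref{WegscheiderLambda}) for every admissible $\boldsymbol{\lambda}$ (with the exponents $\lambda_r$ real, so that $K_r^{\lambda_r}=\exp(\lambda_r \ln K_r)$ is well defined because $K_r>0$).

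I do not anticipate a genuine obstacle: the mathematical content is entirely the Fredholm alternative, and the only steps that demand care are bookkeeping ones — checking that the passage between the multiplicative equilibrium equations and the additive linear system is a true equivalence (which rests on the positivity of the activities and the bijectivity of $\ln$), and verifying that no positivity constraint on $x$ survives the change of variables, so that the real solvability criterion applies verbatim without having to restrict to a cone.
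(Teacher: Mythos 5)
Your argument is correct and follows essentially the route the paper itself takes: the paper reduces detailed balance at a positive equilibrium to the solvability of the linear system (\ref{DetBalGen}) via the logarithmic change of variables $x_i=\ln a_i^{\rm eq}$ (which, as you rightly emphasize, removes any positivity constraint on $x$), and then the Wegscheider identities are exactly the Fredholm orthogonality conditions $\sum_r\lambda_r\ln K_r=0$ for the left null space of $\boldsymbol{\Gamma}$. Your observation that a basis of solutions of (\ref{lambdaGamma}) suffices by linearity of this condition in $\boldsymbol{\lambda}$ completes the statement as in the paper.
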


\subsection{Multiscale degeneration of equilibria \label{Sec:Shifted}}

We consider the systems with some irreversible reactions as the limits of the fully
reversible systems when some reaction rate constants tend to zero. In the reversible
systems, the principle of detailed balance implies the Wegscheider identities
(\ref{WegscheiderLambda}). Therefore, the limit system is not arbitrary. Some
consequences of the Wegscheider identities persist though a part of reaction rate
constants in these identities become zero. In \cite{GorbYabCES2012} we compare these
consequences with the grin of the Cheshire cat: the whole cat (the reversible system with
detailed balance) vanishes but the grin persists.

We can postulate that some reaction rate constants go to zero.  However, the reaction
rate constants are not independent. They are connected by the Wegscheider identities. The
rate constants should tend to zero with preservation of their relations. Therefore, the
simple strategy, just to neglect the rates of some of the reactions, cannot be applied
for complex reactions. Nevertheless, we can change the variables and use the independent
set ``reaction rate constants for the forward reactions + equilibrium activities" (see
\cite{PrigogineDefay1962,GorbanMirFGV1989,YangHlavacek2006,GorbYabCES2012}). Every set of
positive values of these variables corresponds to a reversible system with detailed
balance and no additional restrictions are needed. If the reversible system degenerates
to a system with some irreversible reactions then some of the equilibrium activities tend
to zero. Let us study this process of degeneration of reversible reactions into
irreversible ones starting from the corresponding degeneration of equilibrium activities
to zero.

Let us take a system with detailed balance and send some of the equilibrium
activities to zero: $a_i^{\rm eq} \to 0$ when $i \in I$ for some set of indexes
$I$. Immediately we find a surprise: this assumption is not sufficient to find
a limiting irreversible mechanism. It is necessary to take into account the
rates of the convergency to zero of different $a_i^{\rm eq}$. Indeed, let us
study a very simple example, $$A_1\underset{k_{-1}}{
\overset{k_{1}}{\rightleftharpoons}}  A_2 \underset{k_{-2}}{
\overset{k_{2}}{\rightleftharpoons}}  A_3$$ when $a_1^{\rm eq}, a_2^{\rm eq}
\to 0$.

If $a_1^{\rm eq}, a_2^{\rm eq} \to 0$, $a_1^{\rm eq}/ a_2^{\rm eq}=const>0$ and
$a_3^{\rm eq}=const>0$ then the limit system should be $A_1\underset{k_{-1}}{
\overset{k_{1}}{\rightleftharpoons}} A_2 \to  A_3$ and we can keep
$k_{1,-1,2}=const$ whereas $k_{-2}\to 0$.

If $a_1^{\rm eq}, a_2^{\rm eq} \to 0$, $a_1^{\rm eq}/ a_2^{\rm eq}\to 0$ then
the limit system should be $A_1\to  A_2 \to  A_3$ and we can keep
$k_{1,2}=const>0$ whereas $k_{-1,-2}\to 0$.

If $a_1^{\rm eq}, a_2^{\rm eq} \to 0$, $a_2^{\rm eq}/ a_1^{\rm eq}\to 0$ then
in the limit survives only one reaction $A_2 \to  A_3$ (if we assume that all
the reaction rate constants are bounded).

We study asymptotics $a_i^{\rm eq} = {\rm const}\times \varepsilon^{\delta_i}$,
$\varepsilon \to 0$ for various values of non-negative exponents $\delta_i \geq
0$ ($i=1,\ldots,n$). At equilibrium, each reaction rate in the generalized mass
action law is proportional to a power of $\varepsilon$:
$$w_r^{{\rm
eq}+}= k_r^+ {\rm const} \times \varepsilon^{\sum_i \alpha_{ri} \delta_i}\, ,
\;\; w_r^{{\rm eq}-}=k_r^- {\rm const} \times \varepsilon^{\sum_i \beta_{ri}
\delta_i}\, .$$ According to the principle of detailed balance, $w_r^{{\rm
eq}+}=w_r^{{\rm eq}-}$ and
\begin{equation}\label{asymptk+/k-}
\frac{k_r^+}{k_r^-}={\rm const} \times \varepsilon^{(\gamma_r, \delta)}\, ,
\end{equation}
where $\delta$ is the vector of exponents, $\delta=(\delta_i)$.

There are three groups of reactions with respect to the given vector  $\delta$:
$$1. \, (\gamma_r, \delta)=0; \;\; 2. \, (\gamma_r, \delta)<0; \;\; 3.\, (\gamma_r, \delta)>0 \, .$$
In the first group ($(\gamma_r, \delta)=0$) the ratio ${k_r^+}/{k_r^-}$ remains constant
and we can take $k_r^{\pm}= const> 0$. In the second group ($(\gamma_r, \delta)<0$) the
ratio $k_r^{ -}/k_r^{ +}\to 0$ and we should take $k_r^{ -}\to 0$ whereas $k_r^{ +}$ may
remain constant and positive. In the third group ($(\gamma_r, \delta)>0$), the situation
is inverse: $k_r^{+}/k_r^{-}\to 0$ and we can take $k_r^{-} =const>0$, whereas
$k_r^{+}\to 0$.

These three groups depend on $\delta$ but this dependence is piecewise constant. For
every $\gamma_r$, three sets of $\delta$ are defined: (i) hyperplane $(\gamma_r,
\delta)=0$, (ii) hemispace $(\gamma_r, \delta)<0$ and hemispace $(\gamma_r, \delta)>0$.
The space of vectors $\delta$ is split in the subsets defined by the values of functions
${\rm sign} (\gamma_r, \delta)$ ($\pm 1$ or 0).

We consider bounded systems, hence the negative values of $\delta$ should be forbidden.
At least one equilibrium activity should not vanish. Therefore, $\delta_j=0$ for some
$j$. Below we assume that $\delta_i \geq 0$ and $\delta_j=0$ for a non-empty set of
indices $J_0$. Moreover, the atom balance in equilibrium should be positive. Here, this
means that for the set of equilibrium concentrations $c^{\rm eq}_i$ ($i \in J_0$) the
corresponding values of all atomic concentrations are strictly positive and separated
from zero.

Let the vector of exponents, $\delta=(\delta_i)$ be given and the three groups
of reactions be found. For the reactions of the third group (with
$(\gamma_r,\delta)>0$) the forward reaction vanishes in the limit $\varepsilon
\to 0$. It is convenient to transpose the stoichiometric equations for these
reactions and swap the forward reactions with reverse ones. Let us perform this
transposition. After that, $\alpha_r$ changes over $\beta_r$,  $\gamma$
transforms into $-\gamma$, and the inequality $(\gamma_r,\delta)>0$ transforms
into $(\gamma_r,\delta)<0$.

Let us summarize. We use the given vector of exponents $\delta$ and produce a system with
some irreversible reactions from a system of reversible reactions and detailed balance
equilibrium $a_i^{\rm eq}$ by the following rules:
\begin{enumerate}
\item{if $\delta_i >0$ then we assign $a_i^{\rm eq} = 0$ and if $\delta_i =0$ then
    $a_i^{\rm eq}$ does not change;}
\item{if $(\gamma_r, \delta)=0$ then $k_r^{\pm}$ do not change;}
\item{if $(\gamma_r, \delta)<0$ then we assign $k_r^- = 0$ and $k_r^+$ does not
    change;}
\item{if $(\gamma_r, \delta)>0$ then  we assign $k_r^+ = 0$ and $k_r^-$ does not
    change. (In the last case, we transpose the stoichiometric equation and swap the
    forward reaction with reverse one, for convenience, $\gamma_r$ changes to
    -$\gamma_r$ and $k_r^-$ becomes 0. Therefore, this case transforms into case 3.)}
\end{enumerate}
This is a limit system caused by the multiscale degeneration of equilibrium. The
multiscale character of the limit $a_i^{\rm eq} = const \times \varepsilon^{\delta_i} \to
0$ (for some $i$) is important because for different values of $\delta$  reactions may
have different dominant directions and the set of irreversible reactions in the limit may
change.

The general form of the kinetic equations for the homogeneous systems is
\begin{equation}\label{kinur}
\frac{\D N}{\D t}=V \sum_r w_r \gamma_r \, ,
\end{equation}
where $N_i$ is the amount of $A_i$, $N$ is the vector with components $N_i$ and $V$ is
the volume.

Let us consider a limit system for the degeneration of equilibrium with the vector of
exponents $\delta$. For this system $(\gamma_r, \delta)\leq 0$ for all $r$ and, in
particular, $(\gamma_r, \delta) < 0$ for all irreversible reactions and $(\gamma_r,
\delta) = 0$ for all reversible reactions.

\begin{proposition}\label{LyapunovIrrev}
A linear functional $G_{\delta}(N)=(\delta,N)$ decreases along the solutions of
kinetic equations (\ref{kinur}) for this limit system: $\D G_{\delta}(N)/{\D
t}\leq 0$ and $\D G_{\delta}(N){\D t}= 0$ if and only if all the reaction rates
for the irreversible reactions are zero.
\end{proposition}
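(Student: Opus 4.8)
The plan is to exploit the linearity of $G_{\delta}$ together with the sign information on the inner products $(\gamma_r,\delta)$ that is built into the construction of the limit system. Since $G_{\delta}(N)=(\delta,N)$ depends linearly on $N$, its time derivative along a trajectory is obtained simply by pairing $\delta$ with the right-hand side of the kinetic equations (\ref{kinur}). First I would write
$$\frac{\D G_{\delta}(N)}{\D t}=\left(\delta,\frac{\D N}{\D t}\right)=V\sum_r w_r (\gamma_r,\delta)\, ,$$
interchanging the inner product with the finite sum over reactions.

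The next step is to split the sum according to the three groups introduced before the statement. By construction of the limit system, every reversible reaction satisfies $(\gamma_r,\delta)=0$ and therefore contributes nothing, while every irreversible reaction satisfies $(\gamma_r,\delta)<0$ strictly. The decisive structural fact is that each reaction rate is non-negative: by the generalized mass action law (\ref{GenMAL}) we have $w_r=k_r\prod_i a_i^{\alpha_{ri}}\geq 0$, since $k_r>0$ and all activities $a_i\geq 0$. Consequently the surviving part of the sum runs only over irreversible reactions, and each summand $w_r(\gamma_r,\delta)$ is the product of a non-negative factor $w_r$ and a strictly negative factor $(\gamma_r,\delta)$, hence non-positive. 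This immediately yields $\D G_{\delta}(N)/\D t\leq 0$.

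Finally, for the equality statement I would use that a sum of non-positive terms vanishes if and only if every term vanishes. The reversible contributions are identically zero, so the derivative equals zero precisely when $w_r(\gamma_r,\delta)=0$ for every irreversible reaction. Because $(\gamma_r,\delta)<0$ is strict for these reactions, this is equivalent to $w_r=0$ for all irreversible reactions, which is exactly the claimed characterization.

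I do not anticipate a genuine obstacle: the argument is a direct computation resting on two facts already secured, namely the non-negativity of the mass action rates and the sign pattern $(\gamma_r,\delta)\leq 0$ with strict inequality exactly on the irreversible reactions. The only point demanding care is the bookkeeping of directed reactions versus reversible pairs: one must check that after the transposition in rule~4 every surviving irreversible reaction indeed carries $(\gamma_r,\delta)<0$, and that a reversible pair, for which $(\pm\gamma_r,\delta)=0$, drops out regardless of the sign of its net rate.
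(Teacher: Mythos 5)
Your proposal is correct and follows essentially the same route as the paper: both compute $\D G_{\delta}(N)/\D t=V\sum_r w_r(\gamma_r,\delta)$, drop the reversible terms since $(\gamma_r,\delta)=0$, and use $w_r\geq 0$ together with $(\gamma_r,\delta)<0$ for the irreversible reactions to conclude each remaining summand is non-positive, with equality forcing every $w_r=0$. Your closing remark about the transposition in rule~4 is exactly the bookkeeping the paper performs before stating the proposition, so nothing is missing.
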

\begin{proof} Indeed,
\begin{equation}\label{LinearDiss}
\frac{\D G_{\delta}(N)}{\D t}=V \sum_r w_r (\gamma_r,\delta) \leq 0 \, ,
\end{equation}
because for reversible reactions $(\gamma_r, \delta) = 0$, and for irreversible
reactions $w_r=w_r^+\geq 0$ and $(\gamma_r, \delta) < 0$. All the terms in this
sum are non-negative, hence it may be zero if and only if each summand is zero.
\end{proof}

This Lyapunov function may be used in a proof that the rates of all irreversible
reactions in the system tend to 0 with time. Indeed, if they do not tend to zero then on
a solution of (\ref{kinur}) $G_{\delta}(N(t)) \to -\infty$ when $t\to \infty$ and $N(t)$
is unbounded. Equation (\ref{LinearDiss}) and Proposition~(\ref{LyapunovIrrev}) give us
the possibility to prove the extended principle of detailed balance in the following
form. Let us consider a reaction mechanism that includes reversible and irreversible
reactions. Assume that the reaction rates satisfy the generalized mass action law
(\ref{GenMAL}) and the set of reaction rate constants is given. Let us ask the question:
Is it possible to obtain this reaction mechanism and reaction rate constants as a limit
in the multiscale degeneration of equilibrium from a fully reversible system with the
classical detailed balance. The answer to this question gives the following theorem about
the extended principle of detailed balance.

\begin{theorem}\label{Theorem:ExPrDetBalDegenerat}
A system can be obtained as a limit in the multiscale degeneration of equilibrium from a
reversible system with detailed balance if and only if (i) the reaction rate constants of
the reversible part of the reaction mechanism satisfy the classical principle of detailed
balance and (ii) the convex hull of the stoichiometric vectors of the irreversible
reactions does not intersect the linear span of the stoichiometric vectors of reversible
reactions.
\end{theorem}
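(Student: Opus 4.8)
The plan is to prove both implications of the equivalence, using the exponent vector $\delta$ as the bridge between the geometric conditions (i)--(ii) and the analytic degeneration of Section~\ref{Sec:Shifted}. The necessity direction is essentially a read-out of the defining properties of $\delta$, while the sufficiency direction requires first constructing such a $\delta$ from (i)--(ii) and then exhibiting an explicit one-parameter family of reversible systems with detailed balance that degenerates to the given mechanism.

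For necessity, I would start from the fact that a multiscale limit comes equipped with a vector $\delta \geq 0$ satisfying $(\gamma_r, \delta) = 0$ for every reversible reaction and $(\gamma_r, \delta) < 0$ for every irreversible reaction. Condition (i) then follows from Proposition~\ref{Prop:Wegscheider}: any $\boldsymbol{\lambda}$ supported on the reversible reactions with $\sum_r \lambda_r \gamma_r = 0$ extends by zero to a relation for the full reversible system, whose Wegscheider identity restricts to exactly the identity needed for the reversible part. For condition (ii) I would argue by contradiction: a common point of the convex hull of the irreversible $\gamma_r$ and the span of the reversible $\gamma_s$ would give $\sum_{r \in \mathrm{irr}} \theta_r \gamma_r = \sum_{s \in \mathrm{rev}} \mu_s \gamma_s$ with $\theta_r \geq 0$, $\sum_r \theta_r = 1$. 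Pairing both sides with $\delta$ yields a strictly negative left-hand side (a convex combination of the negative numbers $(\gamma_r, \delta)$) equal to a zero right-hand side, which is impossible.

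For sufficiency, the key step is to produce a vector $\delta \geq 0$ with $(\gamma_r, \delta) = 0$ on the reversible reactions and $(\gamma_r, \delta) < 0$ on the irreversible ones. I would obtain a first, possibly sign-indefinite, $\delta$ from the separating hyperplane theorem applied to the compact convex hull $C$ of the irreversible stoichiometric vectors and the disjoint closed subspace $L$ spanned by the reversible ones, disjointness being precisely condition (ii). Since $L$ is a subspace, the separating functional must vanish on it, giving $(\gamma_s,\delta)=0$ for reversible $s$; normalizing the sign makes $\sup_{C}(\cdot,\delta)<0$, hence $(\gamma_r,\delta)<0$ at every irreversible vertex. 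To enforce $\delta \geq 0$ I would invoke the strictly positive conservation law $b$: because $(\gamma_r, b) = 0$ for all reactions, replacing $\delta$ by $\delta + t b$ leaves every $(\gamma_r, \delta)$ unchanged, and the choice $t = \max_i(-\delta_i/b_i)$ renders $\delta + t b \geq 0$ with at least one vanishing component, as required for a bounded degeneration.

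With $\delta$ fixed, I would build the degenerating family. Writing $x_i = \ln a_i^{\rm eq}$ and setting $x_i(\varepsilon) = \delta_i \ln \varepsilon + y_i$, the detailed balance system (\ref{DetBalGen}) becomes $(\gamma_r, \delta)\ln\varepsilon + (\gamma_r, y) = \ln K_r$. On the reversible reactions this reduces to $(\gamma_r, y) = \ln K_r$, which is solvable for $y$ precisely because (i) guarantees the Wegscheider identities and hence, by Proposition~\ref{Prop:Wegscheider}, solvability. Keeping all forward constants $k_r^+$ fixed and fixing the reverse constants by detailed balance, the reversible $k_r^-$ recover their prescribed values, while for each irreversible reaction $\ln K_r(\varepsilon) = (\gamma_r,\delta)\ln\varepsilon + (\gamma_r,y) \to +\infty$, so $k_r^-(\varepsilon) = k_r^+/K_r(\varepsilon) \to 0$. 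Each $\varepsilon>0$ then yields a fully reversible system with detailed balance and equilibrium $a_i^{\rm eq}(\varepsilon) = \varepsilon^{\delta_i}e^{y_i}$, and the limit $\varepsilon \to 0$ reproduces the given mechanism. I expect the main obstacle to be the sufficiency construction of $\delta$: the separating hyperplane theorem alone yields only the orthogonality and strict-inequality structure, whereas the physically essential non-negativity $\delta \geq 0$ is not automatic and must be recovered from the positive conservation law, which is the step where condition (ii) and the thermodynamic structure genuinely interact.
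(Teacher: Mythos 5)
Your proposal is correct and follows essentially the same route as the paper: the necessity direction reads off condition (ii) from the sign pattern of $(\gamma_r,\delta)$, and the sufficiency direction uses strong separation of the compact convex hull from the closed linear span, shifts the separating functional by a multiple of the positive conservation law $b$ to obtain $\delta\geq 0$ with a vanishing coordinate, and then builds the degenerating reversible family by fixing the forward constants and defining the reverse constants through detailed balance at $a_i^{\rm eq}(\varepsilon)=\varepsilon^{\delta_i}e^{y_i}$. Your phrasing of the necessity of (i) via restriction of the Wegscheider identities, and of the construction in logarithmic variables $x_i=\delta_i\ln\varepsilon+y_i$, are only cosmetic variants of the paper's argument.
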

\begin{proof}
Let the given system be a limit  of a reversible system with detailed balance in the
multiscale degeneration of equilibrium with the exponent vector $\delta$. Then for the
reversible reactions $(\gamma_r,\delta)=0$ and for the irreversible reactions $(\gamma_r,
\delta) < 0$. For every vector $x$ from the convex hull of the stoichiometric vector of
the irreversible reactions $(x, \delta) < 0$ and for any vector $y$ from the linear span
of the stoichiometric vectors of the reversible reactions $(y,\delta)=0$. Therefore,
these sets do not intersect. The reaction rate constants for the reversible reactions
satisfy the classical principle of detailed balance because they do not change in the
equilibrium degeneration and keep this property of the original fully reverse system with
detailed balance.

Conversely, let a system satisfy the extended principle of detailed balance:  (i) the
reaction rate constants of the reversible part of the reaction mechanism satisfy the
classical principle of detailed balance and (ii) the convex hull of the stoichiometric
vectors of the irreversible reactions does not intersect the linear span of the
stoichiometric vectors of reversible reactions. Due to the classical theorems of the
convex geometry, there exists a linear functional that separates this convex set from the
linear subspace. (Strong separation of closed and compact convex sets.) This separating
functional can be represented in the form $(x,\theta)$ for some vector $\theta$. For the
reversible reactions $(\gamma_r,\theta)=0$ and for the irreversible reactions $(\gamma_r,
\theta) < 0$.

It is possible to find vector $\delta$ with this separation property and non-negative
coordinates. Indeed, according to the basic assumptions, there exists a linear
conservation law with strongly positive coordinates. This is a vector $b$ ($b_i>0$) with
the property: $(\gamma_r,b)=0$ for all reactions. For any $\lambda$, the vector
$\theta+\lambda b$ has the same separation property as the vector $\theta$ has. We can
select such $\lambda$ that $ \delta_i=\theta_i+\lambda b_i \geq 0$ and
$\delta_i=\theta_i+\lambda b_i = 0$ for some $i$. Let us take this linear combination
$\delta$ as a vector of exponents.

Let us create a fully reversible system from the initial partially irreversible one. We
do not change the reversible reactions and their rate constants. Because the reversible
reactions satisfy the classical principle of detailed balance, there exists a strongly
positive vector of equilibrium activities $a_i^*>0$ for the reversible reactions.

For each irreversible reaction with the stoichiometric vector $\gamma_r$ and reaction
rate constant $k_r=k_r^+>0$ we add a reverse reaction with the reaction rate constant
$$k_r^-=k_r^+\prod_i(a_i^*)^{-\gamma_{ri}}\, .$$
For this fully reversible system the activities $a_i^*>0$ provide the point of detailed
balance. In the multiscale degeneration process, the equilibrium activities depend on
$\varepsilon \to 0$ as $a_i^{\rm eq}=a_i^* \varepsilon^{\delta_i}$. For the reactions
with $(\gamma_r,\delta)=0$ the reaction rate constants do not depend on $\varepsilon$ and
for the reactions with $(\gamma_r,\delta)<0$ the rate constant $k_r^-$ tends to zero as
$\varepsilon^{-(\gamma_r,\delta)}$ and $k_r^+$ does not change. We return to the initial
system of reactions in the limit $\varepsilon \to 0$.
\end{proof}

This is a particular form of the extended principle of detailed balance. For more
discussion see \cite{GorbYabCES2012}. Fig.~\ref{Fig1ad} illustrates geometric sense of
the extended detailed balance condition: the convex hull of the stoichiometric vectors of
the irreversible reactions does not intersect the linear span of the stoichiometric
vectors of the reversible reactions. In this illustration, $\{\gamma_r\, | \, r\in J_0\}$
are the stoichiometric vectors of the reversible reactions and $\{\gamma_r\, | \, r\in
J_1\}$ are the stoichiometric vectors of the irreversible reactions.

\begin{figure}
\centering{
\includegraphics[width=0.38\textwidth]{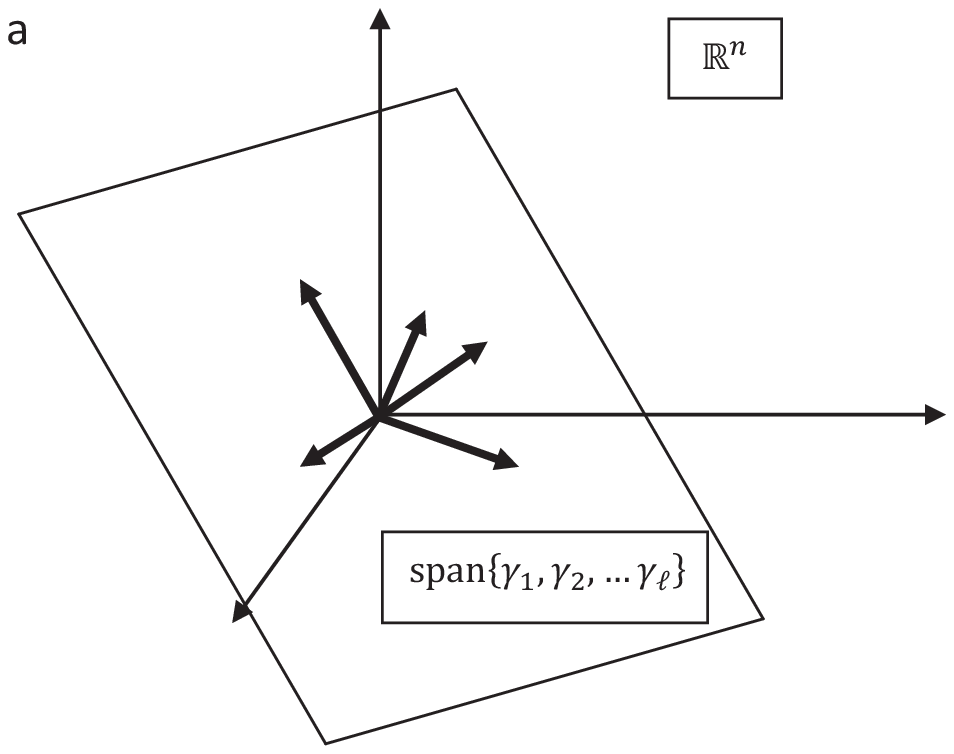}\quad
\includegraphics[width=0.4\textwidth]{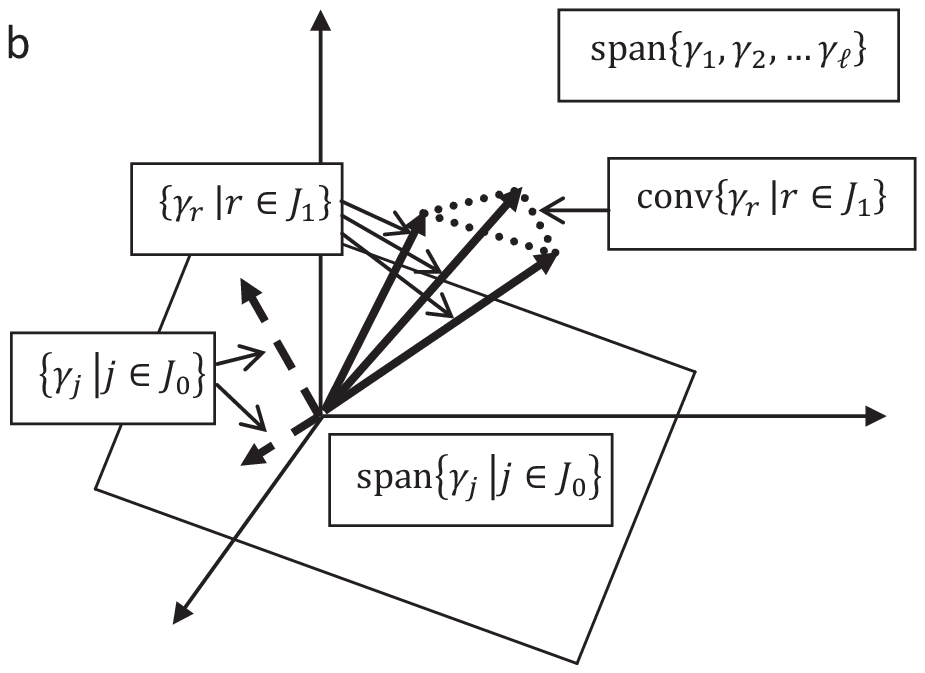} \\
\includegraphics[width=0.35\textwidth]{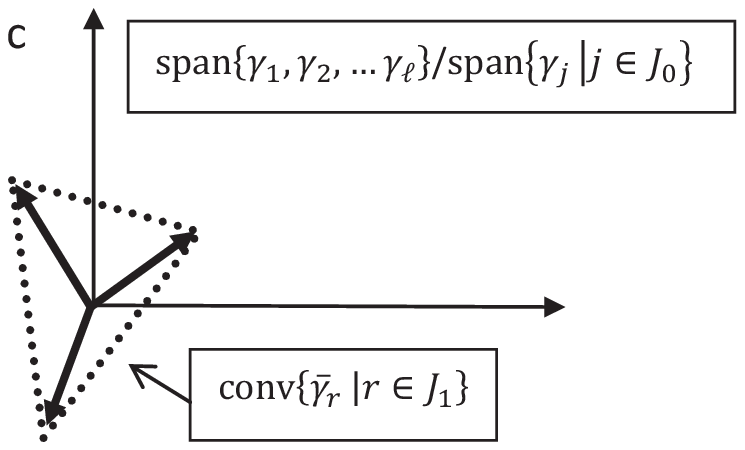} \qquad
\includegraphics[width=0.4\textwidth]{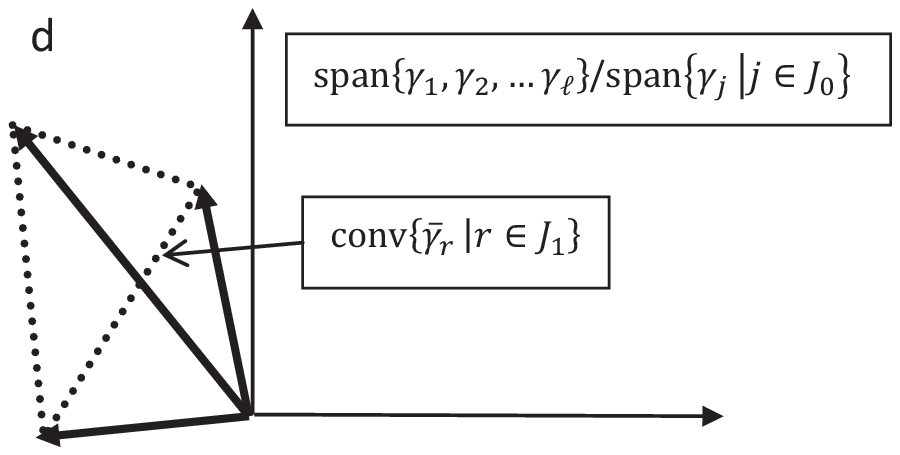}
\caption{\label{Fig1ad}Main operations in the application of the extended detailed balance conditions. In the concentration space $\mathbb{R}^n$
we should find the subspace spanned by all the stoichiometric vectors $\{\gamma_r\, | \, r=1,  \ldots , {\ell}\}$ (a).
In this subspace we have to select the internal coordinates. In span$\{\gamma_r\, | \, r=1,  \ldots , {\ell}\}$ we have to select
the subspace spanned by the stoichiometric vectors of the reversible reactions (b) (the dashed vectors).
The stoichiometric vectors of the irreversible reactions are in (b,c,d) solid and bold.
Due to the extended principle of detailed balance,
span$\{\gamma_r\, | \, r\in J_0\}$ should not intersect conv$\{\gamma_r\, | \, r\in J_1\}$ (the dotted triangle in Fig.).
For analysis of this intersection, it is convenient to  proceed to the quotation space
span$\{\gamma_r\, | \, r=1,  \ldots , {\ell}\}$/span$\{\gamma_r\, | \, r\in J_0\}$ (c,d). In this quotation space,
span$\{\bar{\gamma}_r\, | \, r\in J_0\}$ is $\{0\}$ and two situations are possible: (c) $0\in \mbox{conv}\{\bar{\gamma}_r\, | \, r\in J_1\}$
(the dotted triangle includes the origin) or (d) $0\notin \mbox{conv}\{\gamma_r\, | \, r\in J_1\}$
(the dotted triangle does not include the origin). In the case (c) the extended detailed balance condition is violated. The case (d) satisfies this condition.}}
\end{figure}

\subsection{Activities, concentrations and affinities \label{SecActConc}}

To combine the linear Lyapunov functions $G_{\delta}(N)=(\delta,N)$
(\ref{LinearDiss}) with the classical thermodynamic potential and study the
kinetic equations in the closed form we have to specify the relations between
activities and concentrations. We accept the assumption: $a_i=c_i g_i(c,T) $,
where $g_i(c,T)>0$ is the activity coefficient. It is  a continuously
differentiable function of $c,T$ in the whole diapason of their values. In a
bounded region of concentrations and temperature we can always assume that
$g_i>g_0>0$ for some constant $g_0$. This assumption is valid for the non-ideal
gases and for liquid solutions. It holds also for the ``surface gas" in
kinetics of heterogeneous catalysis \cite{Yablonskiiatal1991} and does not hold
for the solid reagents (see for example, analysis of carbon activity in the
methane reforming \cite{GorbYabCES2012}).

The system of units should be commented. Traditionally, $a_i$ is assumed to be
dimensionless and for perfect systems $a_i=c_i/c_i^{\circ}$, where
$c_i^{\circ}$ is an arbitrary ``standard" concentration. To avoid introduction
of unnecessary quantities, we always assume that in the selected system of
units, $c_i^{\circ}\equiv 1$.

If the thermodynamic potentials exist then due to the thermodynamic  definition
of activity (\ref{StandardActivity}), this hypothesis is equivalent to the
logarithmic singularity of the chemical potentials, $\mu_i= RT\ln c_i +\ldots$
where $\ldots$ stands for a continuous function of $c,T$ (all the
concentrations and the temperature). In this case,  the free energy has the
form
\begin{equation}\label{FreeEnLogSIng}
F(N,T,V)=RT\sum_i N_i (\ln c_i-1 + f_{0i}(c,T))\, ,
\end{equation}
where the functions $f_{0i}(c,T)$ are continuously differentiable for all
possible values of arguments. Functions $f_{0i}$ in the right hand side of the
representation (\ref{FreeEnLogSIng}) cannot be restored unambiguously from the
free energy function $F(N,T,V)$ but for a small admixture $A_i$ it is possible
to introduce the partial pressure $p_i$ which satisfies the law
$p_i=RTc_i+o(c_i)$. This is due to the terms $N_i\ln c_i$ in $F$. Indeed,
$P=-\partial F(N,T,V)/\partial V=RTc_i+o(c_i)+ P\left|_{c_i=0} \right. $.
Connections between the equation of state, free energy and kinetics are
discussed in more detail in \cite{GiovangigliMatus2012,G11984}.

There are several simple algebraic corollaries of the assumed connection
between activities and concentrations. Let us consider an elementary reaction
$\sum \alpha_i A_i \to \sum \beta_i A_i$ with $\alpha_i, \beta_i \geq 0$. Then,
according to the generalized mass action law, for any vector of concentrations
$c$ ($c_i \geq 0$)
\begin{enumerate}
\item{If, for some $i$, $c_i =0$ then $\gamma_i w(c) \geq 0$;}
\item{If, for some $i$, $c_i =0$ and $\gamma_i <0$ then $\alpha_i > 0$ and $w(c)=0$.}
\end{enumerate}
Similarly, for a reversible reaction $\sum \alpha_i A_i \rightleftharpoons \sum
\beta_i A_i$
\begin{enumerate}
\item{If, for some $i$, $c_i =0$ and $\gamma_i >0$ then $\beta_i > 0$ and $w^-(c)=0$;}
\item{If, for some $i$, $c_i =0$ and $\gamma_i <0$ then $\alpha_i > 0$ and $w^+(c)=0$.}
\end{enumerate}
These statements as well as Proposition~\ref{BorderZeros} and
Corollary~\ref{Cor:FaceInvariance} below are the consequences of the generalized mass
action law (\ref{GenMAL}) and the connection between activities and concentrations
without any assumptions about extended principle of detailed balance.

Each set of indexes $J=\{i_1,\ldots , i_j\}$ defines a face of the positive polyhedron,
$$F_J=\{c\, | \, c_i \geq 0 \mbox{ for all } i \mbox{ and } c_i=0 \mbox{ for } i\in J \}\, .$$
By definition, the relative interior of $F_J$, $ri(F_J)$, consists of points with $c_i=0$
for $i\in J$ and $c_i >0$ for $i \notin J$.

\begin{proposition}\label{BorderZeros}Let for a point $c \in ri(F_J)$ and an index $i\in J$
$$\sum_r \gamma_{ri} w_r(c) = 0\, .$$
Then this identity holds for all $c \in F_J$.
\end{proposition}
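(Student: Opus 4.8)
The plan is to exploit the two algebraic corollaries of the generalized mass action law stated just above the proposition, which guarantee that on any face each reaction contributes a term of definite sign to the production rate of a vanishing component. First I would fix the distinguished point $c^* \in ri(F_J)$ and the index $i\in J$ for which the hypothesis $\sum_r \gamma_{ri} w_r(c^*)=0$ is assumed. Since $i\in J$ means $c^*_i=0$, the first corollary yields $\gamma_{ri} w_r(c^*)\geq 0$ for every reaction $r$. Thus the hypothesis presents a sum of non-negative terms equal to zero, which forces the termwise vanishing $\gamma_{ri} w_r(c^*)=0$ for all $r$.

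The next step is to convert these termwise identities into a structural condition on the reactants that is independent of the chosen point. For reactions with $\gamma_{ri}=0$ the term $\gamma_{ri} w_r(c)$ vanishes identically for every $c$, so they are harmless. For reactions with $\gamma_{ri}\neq 0$ we must have $w_r(c^*)=0$. Writing the rate as $w_r=k_r\prod_j (c_j g_j)^{\alpha_{rj}}$ with $k_r>0$ and $g_j=g_j(c,T)>0$, and using that $c^*_j>0$ precisely for $j\notin J$, the vanishing $w_r(c^*)=0$ can only arise from a factor $c_j^{\alpha_{rj}}$ with $c^*_j=0$ and $\alpha_{rj}>0$; that is, there exists an index $j\in J$ with $\alpha_{rj}>0$. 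This is the crux: the vanishing of $w_r$ at one relative-interior point is equivalent to the reactant support of $r$ meeting $J$, a purely combinatorial property of the reaction and of the set $J$.

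Finally I would propagate the conclusion to all of $F_J$. Take an arbitrary $c\in F_J$, so $c_j=0$ for every $j\in J$. Reactions with $\gamma_{ri}=0$ again give a vanishing term. For a reaction with $\gamma_{ri}\neq 0$, the index $j\in J$ with $\alpha_{rj}>0$ produced in the previous step contributes a factor $(c_j g_j)^{\alpha_{rj}}=0$ to the product defining $w_r(c)$, hence $w_r(c)=0$ and the corresponding term vanishes as well. Summing over $r$ then gives $\sum_r \gamma_{ri} w_r(c)=0$ for every $c\in F_J$, which is the claimed identity.

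I do not anticipate a serious obstacle; the only point that needs care is the passage from the single equilibrium-type hypothesis at $c^*$ to a point-independent statement. The argument hinges on recognizing that $w_r(c)=0$ at a relative-interior point is governed solely by whether the reactant multi-index $\alpha_r$ is supported on $J$, a condition insensitive to the particular positive values of the surviving coordinates. The strict positivity $g_j>0$ is exactly what lets us discard the activity coefficients and reduce each vanishing-rate condition to the vanishing of a monomial in the concentrations.
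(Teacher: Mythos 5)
Your proposal is correct and follows essentially the same route as the paper's own proof: termwise vanishing of the non-negative sum at the relative-interior point, the observation that $w_r(c^*)=0$ there forces $\alpha_{rj}>0$ for some $j\in J$, and propagation of this combinatorial condition to all of $F_J$. Your explicit separation of the case $\gamma_{ri}=0$ is a slight tidying of the paper's wording but not a different argument.
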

\begin{proof}For convenience, let us write all the forward and reverse reactions
separately and represent the reaction mechanism in the form (\ref{ReactMech}).
All the terms in the sum $\sum_r \gamma_{ri} w_r(c)$ are non-negative, because
$c_i=0$. Therefore, if the sum is zero then all the terms are zero. The
reaction rate $w_r$ (\ref{GenMAL}) with non-zero rate constant takes zero value
if and only if $\alpha_{rj}>0$ and $a_j=0$ for some $j$. The equality $a_i=0$
is equivalent to $c_i=0$. Therefore, $w_r(c)=0$ for a point $c\in ri(F_J)$ if
and only if there exists $j\in J$ such that $\alpha_{rj}>0$. If $\alpha_{rj}>0$
for an index $j\in J$ then $w_r(c)=0$ for all $c \in F_J$ because $c_j=0$ in
$F_J$.
\end{proof}

We call a face $F_J$ of the positive orthant $\mathbb{R}^n_+$ {\em invariant} with
respect to a set $S$ of elementary reactions if $\sum_{r\in S} \gamma_{rj} w_r(c) = 0\, $
for all $c\in F_J$ and every $j \in J$.

Let us consider the reaction mechanism in the form (\ref{ReactMech}) where all
the forward and reverse reactions participate separately.
\begin{corollary}\label{Cor:FaceInvariance}
The following statements are equivalent:
\begin{enumerate}
\item{$\sum_{r\in S} \gamma_{ri} w_r(c) = 0$ for a point $c \in ri(F_J)$ and all
    indexes $i\in J$;}
\item{The face $F_J$ is invariant with respect to the set of reactions $S$;}
\item{The face $F_J$ is invariant with respect to every elementary reaction from
    $S$;}
\item{For every $r\in S$ either $\gamma_{rj}=0$ for all $j \in J$ or $\alpha_{rj}>0$
    for some $j\in J$.}
\end{enumerate}
\end{corollary}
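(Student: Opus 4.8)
The plan is to prove the four statements equivalent by establishing the cycle $1 \Rightarrow 2 \Rightarrow 3 \Rightarrow 4 \Rightarrow 1$, relying throughout on the single structural fact that each summand $\gamma_{rj}w_r(c)$ is sign-definite once $c_j=0$. For $1 \Rightarrow 2$ I would apply Proposition~\ref{BorderZeros} to the subsystem consisting of the reactions in $S$: its proof uses nothing beyond the non-negativity of the individual terms, so the identity $\sum_{r\in S}\gamma_{ri}w_r(c)=0$ holding at one point of $ri(F_J)$ for every $i\in J$ propagates to all $c\in F_J$, which is precisely invariance of $F_J$ with respect to $S$. For $2 \Rightarrow 3$, fix $c\in F_J$ and $j\in J$; since $c_j=0$, the first algebraic corollary of the generalized mass action law gives $\gamma_{rj}w_r(c)\geq 0$ for every $r$, so the vanishing of the sum forces each term $\gamma_{rj}w_r(c)=0$ individually. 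As $c$ and $j$ are arbitrary, each elementary reaction of $S$ leaves $F_J$ invariant.

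The reaction-by-reaction steps $3 \Rightarrow 4$ and $4 \Rightarrow 1$ are where the explicit form $w_r=k_r\prod_i a_i^{\alpha_{ri}}$ enters. For $3 \Rightarrow 4$ I argue by contraposition on a single reaction $r$: if statement~4 fails for $r$, there is an index $j_0\in J$ with $\gamma_{rj_0}\neq 0$ while $\alpha_{rj}=0$ for all $j\in J$. Then at a point $c\in ri(F_J)$ every activity entering $w_r$ with a positive exponent carries an index outside $J$ and is therefore positive (recall $a_i=c_ig_i(c,T)$ with $g_i>0$, so $a_i=0$ if and only if $c_i=0$); hence $w_r(c)>0$ and $\gamma_{rj_0}w_r(c)\neq 0$, contradicting invariance with respect to $r$. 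For $4 \Rightarrow 1$ I run the dichotomy forward: for each $r\in S$ either $\gamma_{rj}=0$ for all $j\in J$, whence $\gamma_{rj}w_r(c)=0$ trivially, or some $\alpha_{rj}>0$ with $j\in J$, whence $c_j=0$ on $F_J$ makes the factor $a_j^{\alpha_{rj}}=0$ and so $w_r(c)=0$. In both cases $\gamma_{rj}w_r(c)=0$ for all $j\in J$ and all $c\in F_J$; summing over $r\in S$ at any chosen $c\in ri(F_J)$ yields statement~1.

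The only genuinely delicate point I expect is the correct characterization of when a mass action rate vanishes on the face: one must use that the factor $a_i^{\alpha_{ri}}$ contributes a zero precisely when $\alpha_{ri}>0$ and $c_i=0$ (with the convention $a_i^0=1$, so indices with $\alpha_{ri}=0$ are harmless), together with the equivalence $a_i=0\Leftrightarrow c_i=0$ supplied by $a_i=c_ig_i(c,T)$, $g_i>0$. Everything else is bookkeeping around the non-negativity of $\gamma_{rj}w_r(c)$ on the set $\{c_j=0\}$, which is exactly what simultaneously upgrades the one-point hypothesis of statement~1 to the global identity of statement~2 and splits the aggregate sum into per-reaction identities.
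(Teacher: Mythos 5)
Your proof is correct and follows essentially the same route as the paper, which leaves the corollary as an immediate consequence of the argument in Proposition~\ref{BorderZeros}: non-negativity of each term $\gamma_{rj}w_r(c)$ on $\{c_j=0\}$ splits the sum into per-reaction identities, and the characterization $w_r(c)=0$ on $ri(F_J)$ iff $\alpha_{rj}>0$ for some $j\in J$ (via $a_i=0\Leftrightarrow c_i=0$) does the rest. Your cyclic organization $1\Rightarrow2\Rightarrow3\Rightarrow4\Rightarrow1$ is just a tidier packaging of the same ingredients.
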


We aim to perform the analysis of the asymptotic behavior of the kinetic
equations in the multiscale degeneration of equilibrium described in
Sec.~\ref{Sec:Shifted}. For this purpose, we have to answer the question: how
the relations between activities $a_i$ and concentrations $c_i$ depend on the
degeneration parameter $\varepsilon \to 0$? We do no try to find the maximally
general appropriate answer to this question. For the known applications, the
answer is: the relations between $a_i$ and $c_i$ do not depend on $\varepsilon
\to 0$. In particular, it is trivially true for the ideal systems. The simple
generalization, $a_i=c_i g_i(c,T,\varepsilon) $, where
$g_i(c,T,\varepsilon)>g_0>0$ are continuous functions, is not a generalization
at all, because we can use for $\varepsilon \to 0$ the limit case that does not
depend on $\varepsilon$, $g_i(c,T)=g_i(c,T,0)$.

This independence from $\varepsilon$ implies that the reversible part of the reaction
mechanism has the thermodynamic Lyapunov functions like free energy. If we just delete
the irreversible part then the classical thermodynamics is applicable and the
thermodynamic potentials do not depend on $\varepsilon$. For the generalized mass action
law, the time derivative of the relevant thermodynamic potentials have very nice general
form. Let, under given condition, the function $\Phi(N,\ldots)$ be given, where by
$\ldots$ is used for the  quantities that do not change in time under these conditions.
It is the thermodynamics potential if $\partial \Phi(N,\ldots)/\partial N_i =\mu_i$. For
example, it is the free Helmholtz energy $F$ for $V,T=const$ and the free Gibbs energy
$G$ for $P,V=const$.

Let us calculate the time derivative of $\Phi(N,\ldots)$ due to kinetic equation
(\ref{kinur}). The reaction rates are given by the generalized mass action law
(\ref{GenMAL}) with definition of activities through chemical potential
(\ref{StandardActivity}). We assume that the principle of detailed balance holds (it
should hold for the reversible part of the reaction mechanism according to the extended
detailed balance conditions). More precisely, there exists an equilibrium with detailed
balance for any temperature $T$, $a^{\rm eq}(T)$: for all $r$, $w_r^+(a^{\rm
eq})=w_r^-(a^{\rm eq})=w_r^{\rm eq}(T)$. It is convenient to represent the reaction rates
using these {\em equilibrium fluxes} $w_r^{\rm eq}(T)$:
\begin{equation*}
w^+_r=w^{\rm eq}_r \exp \left(\sum_i \frac{\alpha_{ri}(\mu_i-\mu^{\rm
eq}_i)}{RT}\right)\, , \;\; w^-_r=w^{\rm eq}_r \exp \left(\sum_i
\frac{\beta_{ri}(\mu_i-\mu^{\rm eq}_i)}{RT}\right)\, .
\end{equation*}
where $\mu^{\rm eq}_i=\mu_i(a^{\rm eq},T)$.

These formulas give immediately the following representation of the dissipation rate
\begin{equation}\label{DissClass}
\begin{split}
\frac{\D \Phi}{\D t}&=\sum_i \frac{\partial \Phi(N,\ldots)}{\partial N_i} \frac{\D N_i}{\D
t}=\sum_i \mu_i \frac{\D N_i}{\D t} \\&= -VRT \sum_r (\ln w_r^+-\ln w_r^-) (w_r^+-w_r^-) \leq
0\, .
\end{split}
\end{equation}
The inequality holds because $\ln$ is a monotone function and, hence, the expressions
$\ln w_r^+-\ln w_r^-$ and $w_r^+-w_r^-$ have always the same sign. Formulas of this kind
for dissipation are well known since the famous Boltzmann $H$-theorem (1873
\cite{Boltzmann}, see also \cite{Grmela2010}). The entropy increase in isolated systems
has the similar form:
$$
\frac{\D S}{\D t}=VR \sum_r (\ln w_r^+-\ln w_r^-) (w_r^+-w_r^-)\geq 0 \, .
$$
Let us notice that
$$
\ln w_r^+-\ln w_r^-=\frac{1}{RT}\sum_i\mu_i(\alpha_{ri}-\beta_{ri})=-\frac{(\gamma_r,\mu)}{RT}\, .
$$
The quantity $-(\gamma_r,\mu)$ is one of the central notion of physical
chemistry, {\em affinity} \cite{de Donder1936}. It is positive if the forward
reaction prevails over reverse one and negative in the opposite case. It
measures the energetic advantage of the forward reaction over the reverse one
(free energy per mole). The activity divided by $RT$ shows how large is this
energetic advantage comparing to the thermal energy. We call it the {\em
normalized affinity} and use a special notation for this quantity:
$$\mathbb{A}_r=-\frac{(\gamma_r,\mu)}{RT}$$
Let us apply an elementary identity
$$\exp a - \exp b = (\exp a + \exp b) \tanh \frac{a-b}{2}$$
to the reaction rate, $w_r=w_r^+-w_r^-$:
\begin{equation}\label{RatesThroughAff}
w_r=(w_r^+ + w_r^-) \tanh \frac{\mathbb{A}_r}{2}\, .
\end{equation}
This representation of the reaction rates gives immediately for the dissipation rate:
\begin{equation}\label{DissTanh}
\frac{\D \Phi}{\D t}=-VRT \sum_r (w_r^+ + w_r^-)\mathbb{A}_r  \tanh \frac{\mathbb{A}_r}{2} \leq
0\, .
\end{equation}
In this formula, the kinetic information is collected in the non-negative factors, the
sums of reaction rates $(w_r^+ + w_r^-)$. The purely thermodynamical multipliers
$\mathbb{A}_r \tanh ({\mathbb{A}_r}/{2})$ are also non-negative. For small
$|\mathbb{A}_r|$, the expression $\mathbb{A}_r  \tanh ({\mathbb{A}_r}/{2})$ behaves like
$\mathbb{A}_r^2/2$ and for large $|\mathbb{A}_r|$ it behaves like the absolute value,
$|\mathbb{A}_r|$.

So, we have two Lyapunov functions for two fragments of the reaction mechanism. For the
reversible part, this is just a classical thermodynamic potential. For the irreversible
part, this is a linear functional $G_{\delta}(N)=(\delta,N)$. More precisely, the
irreversible reactions decrease this functional, whereas for the reversible reactions it
is the conservation law. Therefore, it decreases monotonically in time for the whole
system.

\section{Attractors \label{Sec:Attr}}

\subsection{Dynamical systems and limit points}

The kinetic equations (\ref{kinur}) do not give a complete representation of
dynamics. The right hand side includes the volume $V$ and the reaction rates
$w_r$ which are functions (\ref{GenMAL}) of the concentrations $c$ and
temperature $T$, whereas in the left hand side there is $\dot{N}$. To close
this system, we need to express $V$, $c$ and $T$ through $N$ and quantities
which do not change in time. This closure depends on conditions. The simplest
expressions appear for isochoric isothermal conditions: $V,T=const$, $c=N/V$.
For other classical conditions ($U,V=const$, or $P,T=const$, or $H,P=const$) we
have to use the equations of state. There may be more sophisticated closures
which include models or external regulators of the pressure and temperature,
for example.

Proposition~\ref{LyapunovIrrev} is valid for all possible closures. It is only important
that the external flux of the chemical components is absent. Further on, we assume that
the conditions are selected, the closure is done, the right hand side of the resulting
system is continuously differentiable  and there exists the positive bounded solution for
initial data in $\mathbb{R}^n_+$ and $V$, $T$ remain bounded and  separated from zero.
The nature of this closure is not crucial. For some important particular closures the
proofs of existence of positive and bounded solutions are well known (see, for example,
\cite{VolpertKhudyaev1985}). Strictly speaking, such a system is not a dynamical system
in $\mathbb{R}^n_+$ but a semi-dynamical one: the solutions may lose positivity and leave
$\mathbb{R}^n_+$ for negative values of time. The theory of the limit behavior of the
semi-dynamical systems was developed for applications to kinetic systems
\cite{GorbanEJDE2004}.

We aim to describe the limit behavior of the systems as $t\to \infty$. Under
the extended detailed balance condition the limit behavior is rather simple and
the system will approach steady states but to prove this behavior we need the
more general notion of the $\omega$-limit points.

By the definition, the $\omega$-limit points of a dynamical system are the
limit points of the motions when time $t \to \infty$. We consider a kinetic
system in $\mathbb{R}^n_+$.  In particular, for each solution of the kinetic
equations $N(t)$ the set of the corresponding $\omega$-limit points is closed,
connected and consists of the whole trajectories (\cite{GorbanEJDE2004},
Proposition 1.5). This means that the motion which starts from an
$\omega$-limit point remains in $\mathbb{R}^n_+$ for all time moments, both
positive and negative.

\begin{proposition}\label{limitpoints}Let $N(t)$ be a positive solution of the kinetic
equation and $x^*$ be an $\omega$-limit point of this solution and $x_i^*=0$.
then at this point $\dot{x}_i|_{x^*}=0$.
\end{proposition}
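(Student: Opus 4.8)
The plan is to combine the invariance property of $\omega$-limit sets (stated just before the proposition, following \cite{GorbanEJDE2004}) with the boundary structure of the kinetic equations established in the elementary corollaries of the generalized mass action law. The key fact I would exploit is that an $\omega$-limit point $x^*$ is not merely a limit, but a point whose entire forward and backward trajectory stays in $\mathbb{R}^n_+$: the motion through $x^*$ remains positive (all $x_j \geq 0$) for all $t$, both positive and negative. This is exactly what was emphasized in the paragraph preceding the statement. So $x^*$ lies on the face $F_J$ for $J = \{i : x_i^* = 0\}$, and the trajectory through it cannot leave this face into negative values at any time.

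The core argument runs as follows. Fix an index $i$ with $x_i^* = 0$. I want to show $\dot{x}_i|_{x^*} = 0$, where by (\ref{kinur}) (after closure) $\dot{x}_i$ is proportional to $\sum_r \gamma_{ri} w_r(x^*)$. First I would invoke the first algebraic corollary of the generalized mass action law in Sec.~\ref{SecActConc}: if $x_i^* = 0$ then each term $\gamma_{ri} w_r(x^*) \geq 0$, because whenever $\gamma_{ri} < 0$ the reactant coefficient $\alpha_{ri} > 0$ forces $w_r(x^*) = 0$ (the activity $a_i = 0$ kills the rate), so only nonnegative contributions survive. Hence $\dot{x}_i|_{x^*} \geq 0$. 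The whole point is then to rule out a strictly positive derivative.

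The suppression of the strict inequality is where the invariance of the $\omega$-limit set does the work, and I expect this to be the main obstacle — or rather the step requiring the most care. Suppose $\dot{x}_i|_{x^*} > 0$. Consider the trajectory $x(t)$ through $x^*$ with $x(0) = x^*$. Since $x_i(0) = 0$ and $\dot{x}_i(0) > 0$, for small $t < 0$ we would have $x_i(t) < 0$, contradicting that the motion through an $\omega$-limit point stays in $\mathbb{R}^n_+$ for all time, including negative time. Therefore $\dot{x}_i|_{x^*} = 0$, which is the claim. The delicate point to verify carefully is precisely the two-sided (in particular backward-in-time) invariance of the $\omega$-limit set, so that negative $t$ are admissible; this is supplied by Proposition~1.5 of \cite{GorbanEJDE2004}, which guarantees that the $\omega$-limit set consists of whole trajectories contained in $\mathbb{R}^n_+$. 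I would state explicitly that the continuous differentiability of the closed right-hand side (assumed earlier in the section) is what lets me pass from the sign of $\dot{x}_i(0)$ to the sign of $x_i(t)$ for small $t$, closing the argument.
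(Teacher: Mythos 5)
Your argument is correct. The decisive step --- that the trajectory through an $\omega$-limit point $x^*$ stays in $\mathbb{R}^n_+$ for \emph{negative} time as well, so that $\dot{x}_i|_{x^*}>0$ with $x_i^*=0$ would force $x_i(\tau)<0$ for small $\tau<0$ --- is exactly the paper's central step, and your appeal to Proposition~1.5 of the cited monograph (whole trajectories of $\omega$-limit sets remain in $\mathbb{R}^n_+$) is the same justification the paper gives via the sequence $N(t_j+\tau)\to x(\tau)$. Where you diverge is in how the other sign is excluded: you rule out $\dot{x}_i|_{x^*}<0$ \emph{algebraically}, from the corollary of the generalized mass action law that each term $\gamma_{ri}w_r(x^*)$ is nonnegative when $x_i^*=0$ (since $\gamma_{ri}<0$ implies $\alpha_{ri}>0$ and hence $w_r(x^*)=0$). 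The paper instead treats the case $v<0$ symmetrically to $v>0$, using \emph{forward}-time positivity of the solution; its proof is therefore purely dynamical and applies to any positivity-preserving closure, without invoking the structure of the rate laws at all. Your hybrid route is perfectly valid here because the kinetics are assumed to obey the generalized mass action law throughout, but note that the half of your argument resting on the rate structure is the half that was already free (forward positivity alone suffices), while the genuinely unavoidable ingredient in both proofs is the backward-in-time invariance, which you correctly identified as the delicate point.
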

\begin{proof} Let $x(t)$ be a solution of the kinetic equations with the initial state
$x(0)=x^*$. All the points $x(t)$ ($-\infty <t<\infty$) belong to
$\mathbb{R}^n_+$. Indeed, there exists such a sequence $t_j \to \infty$ that
$N(t_j) \to x^*$. For any $\tau \in (-\infty,\infty)$, $N(t_j+\tau) \to
x(\tau)$. For sufficiently large $j$, $t_j+\tau >0$ and the value $N(t_j+\tau)
\in \mathbb{R}^n_+ $. Therefore, $x(\tau)\in \mathbb{R}^n_+$ ($-\infty
<\tau<\infty$) and for any $\tau$ the point $x(\tau)$ is an $\omega$-limit
point of the solution $N(t)$. Let $x_i^*=0$ and $\dot{x}_i|_{x^*}=v \neq 0$. If
$v>0$ then for small $|\tau|$ and $\tau<0$ the value of $x_i$ becomes negative,
$x_i(\tau)<0$. It is impossible because positivity. Similarly, If $v<0$ then
for small $\tau >0$ the value of $x_i$ becomes negative, $x_i(\tau)<0$. It is
also  impossible because positivity. Therefore, $\dot{x}_i|_{x^*}=0$.
\end{proof}

We use Proposition~\ref{limitpoints} in the following combination with
Proposition~\ref{BorderZeros}. Let us write the reaction mechanism in the form
(\ref{ReactMech}).
\begin{corollary}\label{Corol:FaceLimit}
If an $\omega$-limit point belongs to the relative interior $riF_J$ of the face
$F_J \subset \mathbb{R}^n_+$  then the face $F_J$ is invariant with respect to
the reaction mechanism and for every elementary reaction either $\gamma_{rj}=0$
for all $j \in J$ or $\alpha_{rj}>0$ for some $j\in J$.
\end{corollary}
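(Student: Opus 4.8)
The plan is to read the corollary off directly from Propositions~\ref{limitpoints} and~\ref{BorderZeros} together with Corollary~\ref{Cor:FaceInvariance}; no new estimate is required, only a careful passage between the dynamical and the algebraic formulations.

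First I would translate the hypothesis that $x^*$ lies in the relative interior $ri(F_J)$ into a statement about reaction fluxes. By definition of $ri(F_J)$ we have $x_i^*=0$ for every $i\in J$ (and $x_i^*>0$ otherwise). Since $x^*$ is an $\omega$-limit point of the positive solution $N(t)$, Proposition~\ref{limitpoints} applies to each such index and gives $\dot{x}_i|_{x^*}=0$ for all $i\in J$. Using the kinetic equation~(\ref{kinur}), $\dot N_i=V\sum_r \gamma_{ri}w_r$, together with the standing assumption that $V$ remains bounded away from zero, this vanishing of the velocity component is equivalent to the algebraic identity
\begin{equation*}
\sum_r \gamma_{ri}\,w_r(x^*)=0 \qquad (i\in J)\, .
\end{equation*}

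Second, I would invoke Proposition~\ref{BorderZeros}. At this stage we have precisely its hypothesis: a point $x^*\in ri(F_J)$ at which $\sum_r \gamma_{ri}w_r(x^*)=0$ for each $i\in J$. Applying the proposition once for every index $i\in J$ extends each identity to the whole face, so that $\sum_r \gamma_{ri}w_r(c)=0$ for all $c\in F_J$ and all $i\in J$. By the definition given above, this is exactly the statement that $F_J$ is invariant with respect to the full reaction mechanism.

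Finally, Corollary~\ref{Cor:FaceInvariance} closes the argument: invariance of $F_J$ (its item~2) is equivalent to the structural dichotomy (its item~4), namely that for every elementary reaction $r$ either $\gamma_{rj}=0$ for all $j\in J$ or $\alpha_{rj}>0$ for some $j\in J$, which is the asserted conclusion. The only point demanding care — and the nearest thing to an obstacle — is the bridge in the first step: one must confirm that the purely dynamical conclusion $\dot{x}_i|_{x^*}=0$ of Proposition~\ref{limitpoints} really does yield the flux identity needed to trigger Proposition~\ref{BorderZeros}. It does, because the chosen closure keeps $V$ positive and bounded, so the common factor $V$ may be divided out without altering the zero set; everything else is direct substitution into the hypotheses of the two cited results.
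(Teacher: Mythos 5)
Your proposal is correct and follows essentially the same route as the paper: Proposition~\ref{limitpoints} gives $\dot{c}_j=0$ for $j\in J$ at the limit point, and then Corollary~\ref{Cor:FaceInvariance} (whose equivalence of items 1 and 2 already packages the extension via Proposition~\ref{BorderZeros} that you spell out explicitly) yields the invariance and the structural dichotomy. Your extra remark about dividing out the bounded, positive factor $V$ is a harmless elaboration of the same argument.
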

\begin{proof}If an $\omega$-limit point belongs to $riF_J$
then at this point all $\dot{c}_j=0$ for $j\in J$
due to Proposition~\ref{limitpoints}. Therefore, we can apply
Corollary~\ref{Cor:FaceInvariance}.\end{proof}

\subsection{Steady states of irreversible reactions \label{Sec:StSt}}

Under extended detailed balance conditions, all the reaction rates of the
irreversible reactions are zero at every limit point of the kinetic equations
(\ref{kinur}), due to Proposition~\ref{LyapunovIrrev}. In this section, we give
a simple combinatorial description of steady states for the set of irreversible
reactions. This description is based on Proposition~\ref{LyapunovIrrev} and,
therefore, uses the extended detailed balance conditions.

We continue to study multiscale degeneration of a detailed balance equilibrium.
The vector of exponents $\delta=(\delta_i)$  is given, $\delta_i \geq 0$ for
all $i$ and $\delta_i = 0$ for some $i$. There are two sets of reaction. For
one of them, $(\gamma_r, \delta)=0$ and in the limit both $k_r^{\pm}>0$. In the
second set, $(\gamma_r, \delta)<0$ and in the limit we assign $k_r^- = 0$ and
$k_r^+$ is the same as in the initial system (before the equilibrium
degeneration). If it is necessary, we transpose the stoichiometric equations
and swap the forward reactions with reverse ones.

For convenience, let us change the notations. Let $\gamma_i$ be the
stoichiometric vectors of reversible reactions with $(\gamma_r, \delta)=0$
($r=1, \ldots , h$), and $\nu_l$  be the stoichiometric vectors for the
reactions from the second set, $(\nu_l, \delta)<0$ ($l=1, \ldots , s$). For the
reaction rates and constants for the first set we keep the same notations:
$w_r$, $w_r^{\pm}$, $k_r^{\pm}$. For the second set, we use for the reaction
rate constants $q_l=q_l^+$ and for the reaction rates $v_l=v_l^+$. (They are
also calculated according to the generalized mass action law (\ref{GenMAL}).)
The input and output stoichiometric coefficients remain $\alpha_{ri}$ and
$\beta_{ri}$ for the first set and for the second set we use the notations
$\alpha_{li}^{\nu}$ and $\beta_{li}^{\nu}$.

Let the rates of all the irreversible reaction be equal to zero. This does not mean that
all the concentrations $a_i$ with $\delta_i>0$ achieve zero. A bimolecular reaction
$A+B\to C$ gives us a simple example: $w=k a_A a_B$ and $w=0$ if either $a_A=0$ or
$a_B=0$. On the plane with coordinates $a_A, a_B$ and with the positivity condition,
$a_A, a_B \geq 0$, the set of zeros of $w$ is a union of two semi-axes, $\{a_A=0, a_B
\geq 0\}$ and $\{a_A\geq 0, a_B = 0\}$. In more general situation, the set in the
activity space, where all the irreversible reactions have zero rates, has a similar
structure: it is the union of some faces of the  positive orthant.

Let us describe the set of the steady states of the irreversible reactions. Due
to Proposition~\ref{LyapunovIrrev}, if $\sum_l v_l \nu_l=0$ then all $v_l=0$.
Let us describe the set of zeros of all $v_l$ in the the positive orthant of
activities.

For every $l=1, \ldots , s$ the set of zeros of $v_l$ in  $\mathbb{R}^n_+$ is
given by the conditions: at least for one $i$ $\alpha_{li}^{\nu}\neq 0$ and
$a_i=0$. It is convenient to represent this condition as a disjunction. Let
$J_l = \{i \, | \, \alpha_{li}^{\nu}\neq 0\}$. Then the set of zeros of $v_l$
an a positive orthant of activities is presented by the formula $\bigvee_{i \in
J_l} (a_i=0)$. The set of zeros of all $v_l$ is represented by the following
conjunction form
\begin{equation}\label{COnjFormStSt}
\wedge_{l=1}^s \left(\vee_{i \in J_l} (a_i=0)\right) \, .
\end{equation}
To transform it into the unions of subspaces we have to move to a disjunction
form and make some cancelations. First of all, we represent this formula as a
disjunction of conjunctions:
\begin{equation}\label{DisForm}
\wedge_{l=1}^s \left(\vee_{i \in J_l}(a_i=0)\right)= \vee_{i_1\in J_1, \ldots ,
i_s \in J_s} \left((a_{i_1}=0) \wedge \ldots \wedge (a_{i_s}=0)\right)\, .
\end{equation}
For a cortege of indexes $\{i_1,\ldots , i_s \}$ the correspondent set of their
values may be smaller because some values $i_l$ may coincide. Let this set of
values be $S_{\{i_1,\ldots , i_s \}}$. We can delete from (\ref{DisForm}) a
conjunction $(a_{i_1}=0) \wedge \ldots \wedge (a_{i_s}=0)$ if there exists a
cortege $\{i'_1,\ldots , i'_s \}$ ($i'_l \in J_l$) with smaller set of values,
$S_{\{i_1,\ldots , i_s \}} \supseteq  S_{\{i'_1,\ldots , i'_s \}}$. Let us
check the corteges in some order and delete a conjunction from (\ref{DisForm})
if there remain a term with smaller (or the same) set of index values in the
formula. We can also substitute in (\ref{DisForm}) the corteges by their sets
of values. The resulting minimized formula may become shorter. Each elementary
conjunction represents a coordinate subspace and after cancelations each this
subspace does not belong to a union of other subspaces. The final form of
formula (\ref{DisForm}) is
\begin{equation}\label{DisForm1}
\vee_j
(\wedge_{i\in S_j} (a_i=0) )\, ,
\end{equation}
where $S_j$  are sets of indexes, $S_j \subset \{1, \ldots ,n\}$ and for every two
different $S_j$, $S_p$ none of them includes another, $S_j \nsubseteq S_p$. The
elementary conjunction $\wedge_{i\in S_j} (a_i=0)$ describes a subspace.

The steady states of the irreversible part of the reaction mechanism are given by the
intersection of the union of the coordinate subspaces (\ref{DisForm1}) with
$\mathbb{R}^n_+$. For applications of this formula, it is important that the equalities
$a_i=0$, $c_i=0$ and $N_i=0$ are equivalent and the positive orthants of the activities
$a_i$,  concentrations $c_i$ or amounts $N_i$ represent the same sets of physical states.
This is also true for the faces of these orthants: $F_J$ for the activities,
concentrations or amounts correspond to the same sets of states. (The same state may
corresponds to the different points of these cones, but the totalities of the states are
the same.)

\subsection{Sets of steady states  of irreversible reactions invariant with respect to
reversible reactions \label{Sec:InvStSt}}

In this Sec. we study the possible limit behavior of systems which satisfy the
extended detailed balance conditions and include some irreversible reactions.
All the $\omega$-limit points of such systems are steady states of the
irreversible reactions due to Proposition~\ref{LyapunovIrrev} but not all these
steady states may be the $\omega$-limit points of the system. A simple formal
example gives us the couple of reaction: $A \rightleftharpoons B$, $A+B \to C$.
Here, we have one reversible and one irreversible reaction. The conditions of
the extended detailed balance hold (trivially): the linear span of the
stoichiometric vector of the reversible reaction, $(-1,1,0)$, does not include
the stoichiometric vector of the irreversible reaction, $(-1,-1,1)$. For the
description of the multiscale degeneration of equilibrium, we can take the
exponents $\delta_A=1,\delta_B=1, \delta_C=0$.

The steady states of the irreversible reaction are given in $\mathbb{R}^n_+$ by
the disjunction, $(c_A=0)\vee (c_B=0)$ but only the points $(c_A=c_B=0)$ may be
the limit points when $t \to \infty$. Indeed, if $c_A=0$ and $c_B >0$ then $\D
c_A /\D t = k_{1}^-c_B>0$. Due to Proposition~\ref{limitpoints} this is not an
$\omega$-limit point. Similarly, the points with $c_A>0$ and $c_B=0$ are not
the $\omega$-limit points.

Let us combine Propositions~\ref{LyapunovIrrev}, \ref{limitpoints} and
Corollary~\ref{Corol:FaceLimit} in the following statement.

\begin{theorem}\label{LocationOmegaLimit}Let the kinetic system satisfy the extended detailed balance
conditions and include some irreversible reactions. Then an $\omega$-limit
point $x^* \in riF_J$ exists if and only if $F_J$ consists of steady states of
the irreversible reactions and is invariant with respect to all reversible
reactions.
\end{theorem}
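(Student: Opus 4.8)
The plan is to prove the two implications separately, carrying the information about the irreversible reactions with the linear Lyapunov functional $G_\delta$ of Proposition~\ref{LyapunovIrrev} and the information about the reversible reactions with the classical free energy restricted to the face.

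For the forward implication, suppose $x^* \in ri F_J$ is an $\omega$-limit point of a positive bounded solution $N(t)$. First I would note that $G_\delta(N(t))=(\delta,N(t))$ is non-increasing and bounded below, hence converges to some limit $L$; since every $\omega$-limit point takes the value $L$ and the $\omega$-limit set is invariant and consists of whole trajectories (Proposition~1.5 of \cite{GorbanEJDE2004}), the functional $G_\delta$ is constant along the motion through $x^*$, so $\D G_{\delta}/\D t=0$ there. By Proposition~\ref{LyapunovIrrev} this forces every irreversible rate $v_l$ to vanish at $x^*$. As $x^*\in ri F_J$, the generalized mass action law then supplies, for each $l$, an index $i\in J$ with $\alpha^{\nu}_{li}>0$; because $c_i=0$ throughout $F_J$, this makes $v_l\equiv 0$ on all of $F_J$, i.e. $F_J$ consists of steady states of the irreversible reactions. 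Invariance of $F_J$ with respect to all reversible reactions is then immediate from Corollary~\ref{Corol:FaceLimit}, which already yields invariance with respect to the full mechanism.

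For the converse, assume $F_J$ consists of irreversible steady states and is invariant with respect to the reversible reactions. I would first check that $F_J$ is invariant under the full closed flow: for $j\in J$ and $c\in F_J$ the quantity $\dot c_j$ gets no contribution from the irreversible reactions (all $v_l=0$) and none from the reversible ones (invariance), so $\dot c_j=0$ and a trajectory started in $F_J$ stays there. Next I would classify the reversible reactions active on $F_J$: applying condition~(4) of Corollary~\ref{Cor:FaceInvariance} to a reversible reaction \emph{and} its reverse (both listed in the form (\ref{ReactMech})), each reversible reaction either has $\gamma_{rj}=0$ for all $j\in J$ with no reactant or product in $J$, so it is a genuine reaction among the components $\{A_i:i\notin J\}$, or else has a reactant in $J$ in both directions and is therefore dead on $F_J$. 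Thus the surviving reactions form a reversible subnetwork supported on the non-$J$ components, and it inherits detailed balance: if $a_i^*>0$ is the detailed balance equilibrium of the reversible part (extended condition (i)), its restriction $(a_i^*)_{i\notin J}$ balances each surviving reaction, since those reactions involve only non-$J$ components.

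Hence on $F_J$ the dynamics is that of a reversible system with detailed balance and a strictly positive equilibrium, for which the restricted free energy (\ref{perfectFE}) is a strict Lyapunov function. By the classical theory (Zeldovich, \cite{VolpertKhudyaev1985}) a solution started in $ri F_J$ converges to the positive detailed balance equilibrium of its stoichiometric compatibility class, which is strictly positive in all non-$J$ coordinates and so lies in $ri F_J$, furnishing the required $\omega$-limit point. I expect the main obstacle to be precisely this last claim — that the attractor genuinely lands in the relative interior rather than drifting onto a smaller subface. It rests on the classical fact that the derivative of the free energy blows up as $c_i\to 0^+$, so the boundary of each compatibility class is repelling and the detailed balance equilibrium is strictly positive; I would invoke this rather than reprove it. A secondary care point is the active/dead reaction bookkeeping, where one must treat a reversible reaction together with its reverse to conclude that the active reactions touch only the non-$J$ components.
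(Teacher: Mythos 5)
Your forward implication is the paper's own argument written out in more detail: the LaSalle-type reasoning with $G_\delta$ and Proposition~\ref{LyapunovIrrev} forces the irreversible rates to vanish at $x^*$, the mass-action structure propagates this from a single point of $riF_J$ to all of $F_J$ (this is exactly Proposition~\ref{BorderZeros}), and Corollary~\ref{Corol:FaceLimit} gives invariance. Where you genuinely diverge is the converse, and in a useful direction. The paper's converse is terser and, as written, proves less than the statement asserts: it observes that the reversible reactions not acting on the $J$-coordinates define a semi-dynamical system on $F_J$, that the positive conservation law cuts out a compact positively invariant polyhedron there, and that dynamics on a compact set has $\omega$-limit points --- which places limit points in $F_J$ but not, by itself, in $riF_J$. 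You correctly identify this as the crux and close the gap by exhibiting the surviving reactions as a reversible subnetwork supported on the non-$J$ components which inherits detailed balance from $a^*$, so that classical thermodynamics applies on the face. One caution: the full strength of ``every positive solution converges to the positive equilibrium of its compatibility class'' is more than you need and is itself delicate (whether boundary equilibria of the subnetwork can attract positive solutions is persistence/global-attractor territory, not settled by the mere blow-up of $\partial F/\partial c_i$). For the theorem as stated it suffices to note that the point of $riF_J$ with activities $a_i^*$ for $i\notin J$ and $0$ for $i\in J$ is a rest point of the full system --- the surviving reversible reactions are balanced there, the dead ones and the irreversible ones have zero rate --- and a rest point is an $\omega$-limit point of its own trajectory. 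A minor slip in your bookkeeping: a reversible pair with $\gamma_{rj}=0$ for all $j\in J$ may still have a reactant in $J$ (a catalytic species, $\alpha_{rj}=\beta_{rj}>0$), in which case it is dead on $F_J$ rather than a reaction of the subnetwork; this does not affect the conclusion, since such reactions contribute nothing on the face either way.
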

\begin{proof}If an $\omega$-limit point $x^* \in riF_J$ exists then it is a
steady state for all irreversible reactions (due to
Propositions~\ref{LyapunovIrrev}). Therefore, the face $F_J$ consists of
steady-states of the irreversible reactions (Proposition~\ref{limitpoints}) and
is invariant with respect to all reversible reactions
(Proposition~\ref{limitpoints} and Corollary~\ref{Corol:FaceLimit}). To prove
the reverse statement, let us assume that $F_J$ consists of steady states of
the irreversible reactions and is invariant with respect to all reversible
reactions. The reversible reactions which do not act on $c_j$ for $j\in J$
define a semi-dynamical system on $F_J$. The positive conservation law $b$
defines an positively invariant polyhedron in $F_J$. Dynamics in such a compact
set always has $\omega$-limit points.\end{proof}

Let us find the faces $F_J$ that contain the $\omega$-limit points in their
relative interior $riF_J$. According to Theorem~\ref{LocationOmegaLimit}, these
faces should consist of the steady states of the irreversible reactions and
should be invariant with respect to all reversible reactions. Let us look for
the {\em maximal faces with this property}. For this purpose, we always
minimize the disjunctive forms by cancelations. We do not list the faces that
contain the $\omega$-limit points in their relative interior and are the proper
subsets of other faces with this property. All the $\omega$-limit points belong
to the union of these maximal faces.

Let us start from the minimized disjunctive form (\ref{DisForm1}). Equation
(\ref{DisForm1}) represents the set of the steady states of the irreversible part of the
reaction mechanism by a union of the coordinate subspaces $\wedge_{i\in S_j} (c_i=0)$ in
intersection with $\mathbb{R}^n_+$. It is the union of the faces, $\cup_j F_{S_j}$. If a
face $F_J$ consists of the steady states of the irreversible reactions then $J \supseteq
S_j$ for some $j$.

The following formula (\ref{LogicLimitFace}) is true on a face $F_J$ if it contains
$\omega$-limit points in the relative interior $riF_J$
(Theorem~\ref{LocationOmegaLimit}):
\begin{equation}\label{LogicLimitFace}
(c_i=0) \Rightarrow \left[\left( \wedge_{r, \gamma_{ri}>0}\ \vee_{j,
\alpha_{rj}>0} (c_j=0)\right) \wedge \left( \wedge_{r, \gamma_{ri}<0}\ \vee_{j,
\beta_{rj}>0} (c_j=0)\right) \right]\, .
\end{equation}
Here, $c_i=0$ in $F_J$ may be read as $i \in J$. Following the previous
section, we use here the notations $\gamma_{ri}$, $\beta_{ri}$ and $\beta_{ri}$
for the reversible reactions and reserve $\nu_l$, $\alpha_{li}^{\nu}$ and
$\beta_{li}^{\nu}$ for the irreversible reactions. The set of $\gamma_r$ in
this formula is the set of the stoichiometric vectors of the reversible
reactions.

The required faces $F_J$ may be constructed in an iterative procedure. First of all, let
us introduce an operation that transforms a set of indexes $S\subset \{1,2,\ldots,n\}$ in
a family of sets, $\mathfrak{S}(S)=\{S'_1,\ldots , S'_{l}\}$.  Let us take formula
(\ref{LogicLimitFace}) and find the set where it is valid for all $i\in S$. This set is
described by the following formula:
\begin{equation}\label{Definmathfrak{S}(S)}
\wedge_{i\in S} \left[(c_i=0) \wedge  \left( \wedge_{r, \gamma_{ri}>0}\
\vee_{j, \alpha_{rj}>0} (c_j=0)\right)  \wedge  \left( \wedge_{r,
\gamma_{ri}<0}\ \vee_{j, \beta_{rj}>0} (c_j=0)\right)\right]\, .
\end{equation}
Let us produce a disjunctive form of this formula and minimize it by
cancelations as it is described in Sec.~\ref{Sec:StSt}. The result is
\begin{equation}
\vee_{j=1,\ldots , k} \left(\wedge_{i\in S_j'} (c_i=0) \right) \, .
\end{equation}
Because of cancelations, the sets $S'_j$ do not include one another. They give the
result, $\mathfrak{S}(S)=\{S'_1,\ldots , S'_{l}\}$. Each $S'_j \in \mathfrak{S}(S)$ is a
superset of $S$, $S' \supseteq S$.

Let us extend the operation $\mathfrak{S}$ on the sets of sets $\mathbf{S}=\{S_1, \ldots,
S_p\}$ with the property: $S_i \not\subset S_j$ for $i \neq j$. Let us apply
$\mathfrak{S}$ to all $S_i$ and take the union of the results:
$\mathfrak{S}_0(\mathbf{S})=\cup_i \mathfrak{S}(S_i)$. Some sets from this  family may
include other sets from it. Let us organize  cancelations: if $S',S'' \in
\mathfrak{S}_0(\mathbf{S})$ and $S' \subset S''$ then retain the smallest set, $S'$, and
delete the largest one. We do the cancelations until it is possible. Let us call the
final result $\mathfrak{S}(\mathbf{S})$. It does not depend on the order of these
operations.

Let us start from any family $\mathbf{S}$ and iterate the operation $\mathfrak{S}$. Then,
after finite number of iterations, the sequence $\mathfrak{S}^d(\mathbf{S})$ stabilizes:
$\mathfrak{S}^d(\mathbf{S})=\mathfrak{S}^{d+1}(\mathbf{S})=\ldots$ because for any set
$S$ all sets from $\mathfrak{S}(S)$ include $S$.

The problems of propositional logic that arise in this and the previous section seem very
similar to elementary logical puzzles \cite{ClarkePuzzles2003}. In the solution we just
use the logical {\em distribution laws} (distribution of conjunction over disjunction and
distribution of disjunction over conjunction), commutativity of disjunction and
conjunction, and elementary cancelation rules like $(A \wedge A) \Leftrightarrow A$, $(A
\vee A) \Leftrightarrow A$, $[A \wedge (A \vee B)]\Leftrightarrow A$, and $[A \vee (A
\wedge B)]\Leftrightarrow A$.

Now, we are in position to describe the construction of all $F_J$ that have the
$\omega$-limit points on their relative interior and are the maximal faces with
this property.
\begin{enumerate}
\item{Let us follow Sec.~\ref{Sec:StSt} and construct the minimized disjunctive form
    (\ref{DisForm1}) for the description of the steady states of the irreversible
    reactions.}
\item{Let us calculate the families of sets $\mathfrak{S}^d(\{S_j\})$ for the family
    of sets $\{S_j\}$ from (\ref{DisForm1}) and $d=1,2,\ldots$, until stabilization.}
\item{Let $\mathfrak{S}^d(\{S_j\})=\mathfrak{S}^{d+1}(\{S_j\}) = \{J_1,J_2, \ldots
    J_p\}$. Then the family of the faces $F_{J_i}$ ($i=1,2, \ldots , p$) gives the
    answer: the $\omega$-limit points are situated in $ri F_{J_i}$ and for each $i$
    there are $\omega$-limit points in $ri F_{J_i}$.}
\end{enumerate}

\subsection{Simple examples}

In this Sec., we present two simple and formal examples of the calculations
described in the previous sections.

1.~$A_1+A_2 \rightleftharpoons A_3+A_4$, $\gamma=(-1,-1,1,1,0)$; $A_1+A_2 \to
A_5$, $\nu=(-1,-1,0,0,1)$. The extended principle of detailed balance holds:
the convex hull of the stoichiometric vectors of the irreversible reactions
consists of one vector $\gamma_2$ and it is linearly independent of $\gamma_1$.
The input  vector $\alpha$ for the irreversible reaction $A_1+A_2 \to A_5$ is
$(-1,-1,0,0,0)$. The set $J=J_l$ from the conjunction form (\ref{COnjFormStSt})
is defined by the non-zero coordinates of this $\alpha^{\nu}$: $J=\{1,2\}$. The
conjunction form in this simple case (one irreversible reaction) loses its
first conjunction operation and is just $(c_1=0)\vee (c_2=0)$. It is, at the
same time, the minimized disjunction form (\ref{DisForm1}) and does not require
additional transformations. This formula describes the steady states of the
irreversible reaction in the positive orthant $\mathbb{R}^n_+$. For this
disjunction form, The family of sets $\mathbf{S}=\{S_j\}$ consists of two sets,
$S_{1}=\{1\}$ and $S_{2}=\{2\}$.

Let us calculate $\mathfrak{S}({S_{1,2}})$. For both cases, $i=1,2$ there are
no reversible reactions with $\gamma_{ri}=0$. Therefore, one expression in
round parentheses vanishes in (\ref{Definmathfrak{S}(S)}). For $S=\{1\}$ this
formula gives
$$(c_1=0)\wedge ((c_3=0)\vee(c_4=0))$$
and for $S=\{2\}$ it gives
$$(c_2=0)\wedge ((c_3=0)\vee(c_4=0))\, .$$

The elementary transformations give the disjunctive forms:
\begin{equation*}
[(c_1=0)\wedge((c_3=0)\vee(c_4=0))]  \Leftrightarrow [((c_1=0)\wedge
(c_3=0))\vee ((c_1=0)\wedge (c_4=0))] \, ,
\end{equation*}
\begin{equation*}
[(c_2=0)\wedge ((c_3=0)\vee(c_4=0))] \Leftrightarrow [((c_2=0)\wedge
(c_3=0))\vee ((c_2=0)\wedge (c_4=0))]\, .
\end{equation*}
 Therefore, $\mathfrak{S}(S_1)=\{\{1,3\}, \{1,4\}\}$, $\mathfrak{S}(S_2)=\{\{2,3\},
\{2,4\}\}$ and
$$\mathfrak{S}(\{S_{1},S_2\})=\{\{1,3\}, \{1,4\},\{2,3\}, \{2,4\}\}\, .$$
No cancelations are needed. The iterations of $\mathfrak{S}$ do not produce new
sets from $\{\{1,3\}, \{1,4\},\{2,3\}, \{2,4\}\}$. Indeed, if $c_1=c_3=0$, or
$c_1=c_4=0$, or $c_2=c_3=0$, or $c_2=c_4=0$ then all the reaction rates are
zero. More formally, for example for $\mathfrak{S}(\{1,3\})$ formula
(\ref{Definmathfrak{S}(S)}) gives
\begin{equation*}
[(c_1=0)\wedge((c_3=0)\vee(c_4=0))] \wedge
[(c_3=0)\wedge((c_1=0)\vee(c_2=0))]\, .
\end{equation*}
This formula is equivalent to
$(c_1=0)\wedge (c_3=0)$. Therefore, $\mathfrak{S}(\{1,3\})=\{1,3\}$. The same
result is true for  $\{1,4\}$, $\{2,3\}$, and $\{2,4\}$.

All the $\omega$-limit points (steady states) belong to the faces
$F_{\{1,3\}}=\{c\, |, c_1=c_3=0\}$, $F_{\{1,4\}}=\{c\, |, c_1=c_4=0\}$,
$F_{\{2,3\}}=\{c\, |, c_2=c_3=0\}$, or $F_{\{2,4\}}=\{c\, |, c_2=c_4=0\}$. The
position of the $\omega$-limit point for a solution $N(t)$ depends on the
initial state. More specifically, this system of reactions has three
independent linear conservation laws: $b_1=N_1+N_2+N_3+N_4+2N_5$, $b_2=N_1-N_2$
and $b_3=N_3-N_4$. For given values of these $b_{1,2,3}$ vector $N$ belongs to
the $2D$ plane in $\mathbb{R}^5$. The intersection of this plane with the
selected faces depends on the signs of $b_{2,3}$:
\begin{itemize}
\item{If $b_2<0$, $b_3<0$ then it intersects $F_{\{1,3\}}$ only, at one point
    $N=(0,-b_2,0,-b_3, b_1+b_2+b_3)$ ($N_5$ should be non-negative, $b_1+b_2+b_3\geq
    0$) .}
\item{If $b_2=0$, $b_3<0$ then it intersects both $F_{\{1,3\}}$ and $F_{\{2,3\}}$ at
    one point $N=(0,0, 0, -b_3, b_1+b_3)$ ($N_5$ should be non-negative, $b_1+b_3\geq
    0$).}
\item{If $b_2<0$, $b_3=0$ then it intersects both $F_{\{1,3\}}$ and $F_{\{1,4\}}$ at
    one point $N=(0,-b_2, 0, 0, b_1+b_2)$ ($N_5$ should be non-negative, $b_1+b_2\geq
    0$).}
\item{If $b_2>0$, $b_3<0$ then it intersects $F_{\{2,3\}}$ only, at one point
    $N=(b_2,0, 0, -b_3, b_1+b_2+b_3)$ ($N_5$ should be non-negative, $b_1+b_2+b_3\geq
    0$).}
\item{If $b_2>0$, $b_3=0$ then it intersects $F_{\{2,3\}}$ and $F_{\{2,4\}}$  at
the point $N=(b_2,0, 0, 0, b_1+b_2)$ ($N_5$ is non-negative because
$b_1+b_2+b_3\geq 0$).}
\item{If $b_2<0$, $b_3>0$ then it intersects $F_{\{1,4\}}$ only, at one point
    $N=(0,-b_2, b_3, 0, b_1+b_2+b_3)$ ($N_5$ should be non-negative, $b_1+b_2+b_3\geq
    0$).}
\item{If $b_2=0$, $b_3>0$ then it intersects $F_{\{1,4\}}$ and $F_{\{2,4\}}$  at one
    point $N=(0,0, b_3, 0, b_1+b_3)$ ($N_5$ is non-negative because $b_1+b_3\geq
    0$).}
\item{If $b_2>0$, $b_3>0$ then it intersects $F_{\{2,4\}}$ only, at one point
    $N=(b_2,0, b_3, 0, b_1+b_2+b_3)$ ($N_5$ is non-negative because $b_1+b_2+b_3\geq
    0$).}
\end{itemize}
As we can see, the system has exactly one $\omega$-limit point for any
admissible combination of the values of the conservation laws. These points are
the listed points of intersection.

For the second simple example, let us change the direction of the irreversible reaction.

2.~$A_1+A_2 \rightleftharpoons A_3+A_4$, $\gamma_1=(-1,-1,1,1,0)$, $A_5 \to
A_1+A_2 $,  $\nu=(1,1,0,0,-1)$. The extended principle of detailed balance
holds. The steady-states of the irreversible reactions is given by one
equation, $c_5=0$. Formula (\ref{Definmathfrak{S}(S)}) gives  for
$\mathfrak{S}(\{5\})$ just $(c_5=0)$. The face $F_{\{5\}}$ includes
$\omega$-limit points in $riF_{\{5\}}$. Dynamics on this face is defined by the
fully reversible reaction system and tends to the equilibrium of the reaction
$A_1+A_2 \rightleftharpoons A_3+A_4$ under the given conservation laws. On this
face, there exist the border equilibria, where $c_1=c_3=0$, or   $c_1=c_4=0$,
or $c_2=c_3=0$, or $c_2=c_4=0$ but they are not attracting the positive
solutions.

\section{Example: H$_2$+O$_2$ system \label{Sec:Hydro}}

For the case study, we selected the H$_2$+O$_2$ system. This is one of the main model
systems of gas kinetics. The hydrogen burning gives us an example of the medium
complexity with 8 components ($A_1=$H$_2$, $A_2=$O$_2$, $A_3=$OH, $A_4=$H$_2$O, $A_5=$H,
$A_6=$O, $A_7=$HO$_2$, and $A_8=$H$_2$O$_2$) and 2 atomic balances (H and O). For the
example, we selected the reaction mechanism from \cite{Vlachos1996}. The literature about
hydrogen burning mechanisms is huge. For recent discussion we refer to
\cite{LiDryer2004,Saxena2006}. We do not aim to compare the different schemes of this
reaction but use this reaction mechanism as an example and a benchmark.

\begin{table}[t]
\centering \caption{H$_2$ burning mechanism \cite{Vlachos1996} \label{TableVlachos}}
{%\small
\begin{tabular}{|l|l|c|}
\hline {No} & {Reaction} & { Stoichiometric vector}
\\ \hline
1   &H$_2$ + O$_2$ $\rightleftharpoons$ 2OH &(-1,-1,2,0,0,0,0,0) \\
2   &H$_2$ + OH $\rightleftharpoons$ H$_2$O + H &(-1,0,-1,1,1,0,0,0)\\
3   &OH + O $\rightleftharpoons$ O$_2$ + H &(0,1,-1,0,1,-1,0,0)  \\
4   &H$_2$ + O $\rightleftharpoons$ OH + H &(-1,0,1,0,1,-1,0,0)\\
5   &O$_2$ + H +M $\rightleftharpoons$ HO$_2$ +M &(0,-1,0,0,-1,0,1,0) \\
6   &OH + HO$_2$  $\rightleftharpoons$ O$_2$ + H$_2$O &(0,1,-1,1,0,0,-1,0)\\
7   &H + HO$_2$ $\rightleftharpoons$ 2OH &(0,0,2,0,-1,0,-1,0)\\
8   &O + HO$_2$ $\rightleftharpoons$ O$_2$ + OH &(0,1,1,0,0,-1,-1,0) \\
9   &2OH $\rightleftharpoons$ H$_2$O + O &(0,0,-2,1,0,1,0,0)\\
10  &2H + M $\rightleftharpoons$ H$_2$ + M &(1,0,0,0,-2,0,0,0)\\
11  &2H + H$_2$ $\rightleftharpoons$ H$_2$ + H$_2$ &(1,0,0,0,-2,0,0,0)\\
12  &2H  + H$_2$O $\rightleftharpoons$ H$_2$ + H$_2$O &(1,0,0,0,-2,0,0,0)\\
13  &OH + H + M $\rightleftharpoons$ H$_2$O + M &(0,0,-1,1,-1,0,0,0)\\
14  &H + O + M $\rightleftharpoons$ OH + M &(0,0,1,0,-1,-1,0,0)\\
15  &2O  + M $\rightleftharpoons$ O$_2$ + M &(0,1,0,0,0,-2,0,0)\\
16  &H + HO$_2$ $\rightleftharpoons$ H$_2$ + O$_2$ &(1,1,0,0,-1,0,-1,0)\\
17  &2HO$_2$ $\rightleftharpoons$ O$_2$ + H$_2$O$_2$ &(0,1,0,0,0,0,-2,1)\\
18  &H$_2$O$_2$  + M $\rightleftharpoons$ 2OH + M &(0,0,2,0,0,0,0,-1)\\
19  &H + H$_2$O$_2$ $\rightleftharpoons$ H$_2$ + HO$_2$ &(1,0,0,0,-1,0,1,-1)\\
20  &OH + H$_2$O$_2$ $\rightleftharpoons$ H$_2$O + HO$_2$ &(0,0,-1,1,0,0,1,-1)\\
\hline
\end{tabular} }
\end{table}

A special symbol ``M" is used for the ``third body". It may be any molecule.
The third body provides the energy balance. Efficiency of different molecules
in this process is different, therefore, the ``concentration" of the third body
is a weighted sum of the concentrations of the components with positive
weights. The third body does not affect the equilibrium constants and does not
change the zeros of the forward and reverse reaction rates but modifies the
non-zero values of reaction rates. Therefore, for our analysis we can omit
these terms. The elementary reactions 10, 11 and 12 are glued in one,
2H$\rightleftharpoons$H$_2$, after cancelation of the third bodies,  and we
analyze the mechanism of 18 reaction.

Under various conditions, some of the reactions are (almost) irreversible and some of
them should be considered as reversible. For example, let us consider the H$_2$+O$_2$
system  at or near the atmospheric pressure and in the temperature interval 800--1200K.
The reactions 1, 2,  4, 18, 19, and 20 are supposed to be reversible (on the base of the
reaction rate constants presented in \cite{Vlachos1996}). The first question is: if these
reactions are reversible then which reactions may be irreversible?

Due to the general criterion, the convex hull of the stoichiometric vectors of
the irreversible reactions has empty intersection with the linear span of the
stoichiometric vectors of the reversible reactions. Therefore, if the
stoichiometric vector of a reaction belongs to the linear span of the
stoichiometric vectors of the reversible reactions, then this reaction is
reversible. Simple linear algebra gives that
$$\gamma_{3,5,9} \in {\rm span} \{\gamma_1, \gamma_2, \gamma_4, \gamma_{18}, \gamma_{19},
\gamma_{20}\}\, .$$ In particular, $\gamma_3=-\gamma_1+\gamma_4$,
$\gamma_5=\gamma_1-\gamma_{18}+\gamma_{19}$,  $\gamma_9=\gamma_2-\gamma_4$. So,
the list of the reversible reactions should include the reactions 1, 2, 3, 4,
5, 9, 18, 19, and 20. The reactions 6, 7, 8, 10, 11, 12, 13, 14, 15, and 17 may
be irreversible. Formally, there are $2^{8}=256$ possible combinations of the
directions of these 8 reactions (the reactions 10, 11 and 12 have the same
stoichiometric vector and, in this sense, should be considered as one
reaction). The general criterion and simple linear algebra give that there are
only two admissible combinations of the directions of irreversible reactions:
either for all of them $k^-_r=0$ or for all of them $k^+_r=0$. Here,  the
forward and reverse reactions and the notations $k^{\pm}_r$ are selected
according to the Table~\ref{TableVlachos}. We can immediately notice that the
inverse direction of all reactions is very far from the reality under the given
conditions, for example, it includes the irreversible dissociation H$_2\to2$H.

Let us demonstrate in detail, how the general criterion produces this reduction from the
256 possible combinations of directions of irreversible reactions to just 2 admissible
combinations. We assume that the initial set of reactions is spit in two: reversible
reactions with numbers $r\in J_0$ and irreversible reactions with $r\in J_1$,
rank$\{\gamma_1,\gamma_2,\ldots , \gamma_{\ell}\}=d$, rank$\{\gamma_r \, | \, r\in
J_0\}=d_0$. The rank of all vectors $\gamma_r$, $d$, must exceed the rank of the
stoichiometric vectors of the reversible reactions, $d>d_0$, because if $d=d_0$ then all
the reactions must be reversible and the problem becomes trivial.

According to \cite{GorbYabCES2012}, we have to perform the following operations with the
set of stoichiometric vectors $\gamma_r$ (Fig.~\ref{Fig1ad}):
\begin{enumerate}
\item{Eliminate several coordinates from all $\gamma_r$ using linear conservation
    laws. This is transfer to the internal coordinates in span$\{\gamma_r\, | \, r=1,
    \ldots , {\ell}\}$;}
\item{Eliminate coordinates from all $\gamma_r$ ($r\in J_1$) using the stoichiometric
    vectors of the reversible reactions and the  {\em Gauss--Jordan elimination}
    procedure. This is the map to the quotient space span$\{\gamma_j\, | \, j=1,
    \ldots , {\ell} \}/\mbox{span}\{\gamma_j\, | \, j\in J_0\}$. Me denote the result
    as $\overline{\gamma}_r$;}
\item{Use the linear programming technique and analyze for which combinations of the
    signs, the convex hull conv$\{\pm \overline{\gamma}_r\, |\, r\in J_1\}$ does not
    include 0.}
\end{enumerate}

\begin{table*}[t]
\centering \caption{Elimination of coordinates of stoichiometric vectors for H$_2$
burning mechanism. The reversible reactions are collected in the upper part of the Table.
The reaction in the lower part of the table are irreversible. The group of equivalent
reactions 10, 11, 12 is presented by one of them. In the second column, the first two
coordinates (which correspond to H$_2$ and O$_2$) are excluded using the atomic balance.
In the following columns the results of the coordinates elimination are presented. For
each step, the pivot for elimination is underlined and highlighted  in bold in the
previous column. The eliminated coordinates at each step are named at the top of each
column.
Their zero values are omitted. \label{TableElimination}} {%\small
\begin{tabular}{|l|c|c|c|c|c|c|}
\hline No &\hspace{-1mm} H$_2$, O$_2$ \hspace{-1mm} &\hspace{-1mm}  OH
\hspace{-1mm}&\hspace{-1mm} H$_2$O$_2$  \hspace{-1mm}&\hspace{-1mm}
H$_2$O \hspace{-1mm}&\hspace{-1mm} H \hspace{-1mm}&\hspace{-1mm} O \hspace{-1mm}
\\ \hline
1   & ($\underline{\mathbf{ 2}}$,0,0,0,0,0) & (0,0,0,0,0)&(0,0,0,0)&(0,0,0)&(0,0)&(0) \\
2   &(-1,1,1,0,0,0) & (1,1,0,0,0)&($\underline{\mathbf{1}}$,1,0,0)&(0,0,0)&(0,0)&(0)\\
3   &(-1,0,1,-1,0,0) &(0,1,-1,0,0)&(0,1,-1,0)  & ($\underline{\mathbf{1}}$,-1,0)&(0,0)&(0)\\
4   &(1,0,1,-1,0,0) &(0,1,-1,0,0)&(0,1,-1,0)&(1,-1,0)&(0,0)&(0)\\
5   &(0,0,-1,0,1,0) & (0,-1,0,1,0)&(0,-1,0,1)&(-1,0,1)&({\bf -}$\underline{\mathbf{1}}$,1)&(0)\\
9   &(-2,1,0,1,0,0) & (1,0,1,0,0)&(1,0,1,0)&(-1,1,0)&(0,0)&(0)\\
18  &(2,0,0,0,0,-1) &(0,0,0,0,{\bf -}$\underline{\mathbf{1}}$)&(0,0,0,0)&(0,0,0)&(0,0)&(0)\\
19  &(0,0,-1,0,1,-1)&(0,-1,0,1,-1)&(0,-1,0,1)&(-1,0,1)&(-1,1)&(0)\\
20  &(-1,1,0,0,1,-1)&(1,0,0,1,-1)&(1,0,0,1)&(-1,0,1)&(-1,1)&(0)\\
\hline \hline
6   &(-1,1,0,0,-1,0)&(1,0,0,-1,0)&(1,0,0,-1)&(-1,0,-1)&(-1,-1)&(-2)\\
7   &(2,0,-1,0,-1,0)&(0,-1,0,-1,0)&(0,-1,0,-1)&(-1,0,-1)&(-1,-1)&(-2)\\
8   &(1,0,0,-1,-1,0)&(0,0,-1,-1,0)&(0,0,-1,-1)&(0,-1,-1)&(-1,-1)&(-2) \\
10  &(0,0,-2,0,0,0)&(0,-2,0,0,0)&(0,-2,0,0)&(-2,0,0)&(-2,0)&(-2)\\
13  &(-1,1,-1,0,0,0)&(1,-1,0,0,0)&(1,-1,0,0)&(-2,0,0)&(-2,0)&(-2)\\
14  &(1,0,-1,-1,0,0)&(0,-1,-1,0,0)&(0,-1,-1,0)&(-1,-1,0)&(-2,0)&(-2)\\
15  &(0,0,0,-2,0,0)&(0,0,-2,0,0)&(0,0,-2,0)&(0,-2,0)&(-2,0)&(-2)\\
16  &(0,0,-1,0,-1,0)&(0,-1,0,-1,0)&(0,-1,0,-1)&(-1,0,-1)&(-1,-1)&(-2)\\
17  &(0,0,0,0,-2,1)&(0,0,0,-2,1)&(0,0,0,-2)&(0,0,-2)&(0,-2)&(-2)\\
\hline
\end{tabular}                    }
\end{table*}

In the Table~\ref{TableElimination} we present the results of the step-by-step
elimination. First, the atomic balances give us for every possible stoichiometric vector
$\eta=(\eta_1, \ldots ,\eta_8)$ two identities:
\begin{enumerate}
\item{$2\eta_1+\eta_3+2\eta_4+\eta_5+\eta_7+2\eta_8 = 0$ or $\eta_1=
    -\frac{1}{2}(\eta_3+2\eta_4+\eta_5+\eta_7+2\eta_8)$;}
\item{$2 \eta_2+\eta_3+\eta_4+\eta_6+2\eta_7+2\eta_8=0$ or $ \eta_2=
    -\frac{1}{2}(\eta_3+\eta_4+\eta_6+2\eta_7+2\eta_8)$.}
\end{enumerate}
Let us recall that the order of the coordinates $(\eta_1, \ldots ,\eta_8)$ corresponds to
the following order of the components, (H$_2$, O$_2$, OH, H$_2$O, H, O, HO$_2$,
H$_2$O$_2$). Due to these identities, a stoichiometric vector $\eta$ for this mixture is
completely defined by six coordinates $(\eta_3, \ldots ,\eta_8)$. In the second column of
the Table~\ref{TableElimination} these 6D vectors are given for all the reactions from
the H$_2$ burning mechanism (the Table~\ref{TableVlachos}).

In five columns No. 3-7, the results of the coordinate eliminations are presented (and
the zero-valued eliminated coordinates are omitted). Each elimination step may be
represented as a projection:
$$x\mapsto x - x_i \frac{1}{\eta_i} \eta \, ,$$
where $\eta_i$ is a {\em pivot} (highlighted in bold in the column preceding
the result of elimination), and $\eta$ is the vector that includes the pivot
(as the $i$th coordinate). The projection operator is applied to every vector
of the previous column. At the end (the last column), all the stoichiometric
vectors of the reversible reaction are transformed into zero, and the
stoichiometric vectors of the irreversible reactions with the given direction
(from the left to the right) are transformed into the same vector $(-2)$. If we
restore all the zeros, then the corresponding 6D vector is $(0,0,0,0,-2,0)$. We
have to use the atomic balances to return to the 8D vectors. The coordinate
$x_7$ corresponds to HO$_2$, $x_1$ corresponds to H$_2$, and $x_2$ corresponds
to O$_2$, hence, $2x_1-2=0$ and $2x_2-4=0$. The restored 8D vector is
$(1,2,0,0,0,0,-2,0)$.

A convex combination of several copies of one vector cannot give zero.
Therefore, the structural condition of the extended principle of detailed
balance holds. It holds also for the inverse direction of all the irreversible
reactions. All other distributions of directions can produce zero in the convex
hull and are inadmissible. So, we have the following list of irreversible
reactions that satisfies the extended principle of detailed balance for given
reversible reactions. (We will not discuss the second list of reverse
irreversible reactions because it has not much sense for given conditions.)

\begin{tabular}{l l}
6   &OH + HO$_2$  $\to$ O$_2$ + H$_2$O \\
7   &H + HO$_2$ $\to$ 2OH \\
8   &O + HO$_2$ $\to$ O$_2$ + OH \\
10  &2H $\to$ H$_2$ \\
13  &OH + H $\to$ H$_2$O \\
14  &H + O $\to$ OH \\
15  &2O  $\to$ O$_2$ \\
16  &H + HO$_2$ $\to$ H$_2$ + O$_2$ \\
17  &2HO$_2$ $\to$ O$_2$ + H$_2$O$_2$.
\end{tabular}

We assume that all the reaction rate constants for the selected directions are strictly
positive. The rate of all these reaction vanish if and only if concentration of H, O and
HO$_2$ are equal to zero, $c_{5,6,7}=0$. Indeed, $c_{5}=0$ if and only if $w_{10}=0$,
$c_{6}=0$ if and only if $w_{15}=0$, $a_{7}=0$ if and only if $w_{17}=0$. On the other
hand, all other reaction rates from this list are zeros if $c_{5,6,7}=0$.

Let us reproduce this reasoning using formulas from Sec.~\ref{Sec:StSt}. For the $l$th
irreversible reaction, $J_l$ is the set of indexes $i$ for which $\alpha_{li}\neq 0$. Let
us keep for the irreversible reactions their numbers (6, 7, 8, 10, 13, 14, 15, 16, 17).
For them, $J_6=\{3,7\}$, $J_7=\{5,7\}$, $J_8=\{6,7\}$, $J_{10}=\{5\}$, $J_{13}=\{3,5\}$,
$J_{14}=\{5,6\}$, $J_{15}=\{6\}$, $J_{16}=\{5,7\}$, $J_{17}=\{7\}$.

Formula (\ref{DisForm1}) gives for the steady states of the irreversible reactions:
\begin{equation*}
\begin{split}
&((c_3=0)\vee(c_7=0))\wedge((c_5=0)\vee(c_7=0)) \wedge ((c_6=0)\vee(c_7=0)) \wedge (c_5=0) \\
  \wedge &((c_3=0)\vee (c_5=0)) \wedge ((c_5=0)\vee(c_6=0)) \wedge (c_6=0)   \wedge ((c_5=0)\vee(c_7=0))\wedge(c_7=0).
\end{split}
\end{equation*}
It is equivalent to $$(c_5=0)\wedge(c_6=0)\wedge(c_7=0)\, .$$ Of course, the result is
the same, the face $F_{\{5,6,7\}}$ ($c_{5,6,7}=0$, $c_i \geq 0$) is the set of the steady
states of all irreversible reaction.

Let  us look now on the list of reversible reactions:

\begin{tabular}{l l}
1   &H$_2$ + O$_2$ $\rightleftharpoons$ 2OH \\
2   &H$_2$ + OH $\rightleftharpoons$ H$_2$O + H\\
3   &OH + O $\rightleftharpoons$ O$_2$ + H \\
4   &H$_2$ + O $\rightleftharpoons$ OH + H \\
5   &O$_2$ + H $\rightleftharpoons$ HO$_2$ \\
9   &2OH $\rightleftharpoons$ H$_2$O + O \\
18  &H$_2$O$_2$  $\rightleftharpoons$ 2OH \\
19  &H + H$_2$O$_2$ $\rightleftharpoons$ H$_2$ + HO$_2$ \\
20  &OH + H$_2$O$_2$ $\rightleftharpoons$ H$_2$O + HO$_2$ \\
\end{tabular}

If the concentration OH ($c_3$) is positive then the component O is produced in the
reaction 9. If the concentrations of H$_2$ ($c_1$) and OH ($c_3$) both are positive then
the component H is produced in reaction 2. If the concentrations of H$_2$O$_2$ ($c_8$)
and OH ($c_3$) both are positive then the component HO$_2$ is produced in reaction 2. Due
to the reversible reaction 18 any of two components H$_2$O$_2$ and OH produces the other
component. Moreover, the first reaction produces OH from H$_2$ + O$_2$. This production
stops if and only if either concentration of H$_2$ is zero ($c_1=0$) or concentration of
O$_2$ is zero ($c_2=0$).

This means that the set of zeros of the irreversible reactions, $c_{5,6,7}=0$ ($c\geq
0$), is {\em not invariant} with respect to the kinetics of the reversible reactions.
This means that from an initial conditions on this set the kinetic trajectory will leave
it unless, in addition, $c_3=c_8=0$ and either $c_1=0$ or $c_2=0$.

The reactions of all irreversible reactions should tend to zero due to
Proposition~\ref{LyapunovIrrev}. Therefore, the kinetic trajectory should
approach the union of two planes, $c_{1,3,5,6,7,8}=0$ and $c_{2,3,5,6,7,8}=0$
(under condition $c\geq 0$). These planes are two-dimensional and the position
of the state there is completely defined by the atomic balances.

If the concentration vector belongs to the first plane, then all the atoms are collected
in O$_2$ and H$_2$O. It is possible if and only if $b_{\rm O} \geq \frac{1}{2}b_{\rm H}$.
In this case, $c_4=\frac{1}{2}b_{\rm H}$ and $c_2=\frac{1}{2}(b_{\rm O}-\frac{1}{2}b_{\rm
H})$.

If the concentration vector belongs to the second plane, then all the atoms are collected
in H$_2$ and H$_2$O. It is possible if and only if $b_{\rm O} \leq \frac{1}{2}b_{\rm H}$.
In this case, $c_4=b_{\rm O}$ and $c_1=\frac{1}{2}(b_{\rm H}-2b_{\rm O})$.

Let us reproduce this reasoning formally using the general formalism of
Sec.~\ref{Sec:InvStSt}. Formula~\ref{Definmathfrak{S}(S)} gives for
$\mathfrak{S}(\{5,6,7\})$
\begin{equation}\label{mathfrak{S}5,6,7)}
\begin{split}
(c_5=0)\wedge &\left( \wedge_{r, \gamma_{r5}>0}\vee_{j, \alpha_{rj}>0}
(c_j=0)\right)  \wedge \left( \wedge_{r,\gamma_{r5}<0}\ \vee_{j, \beta_{rj}>0} (c_j=0)\right) \\
\wedge  (c_6=0)\wedge &\left( \wedge_{r, \gamma_{r6}>0}\vee_{j, \alpha_{rj}>0}
(c_j=0)\right)
\wedge \left( \wedge_{r,\gamma_{r6}<0}\ \vee_{j, \beta_{rj}>0} (c_j=0)\right) \\
\wedge  (c_7=0)\wedge &\left( \wedge_{r, \gamma_{r7}>0}\vee_{j, \alpha_{rj}>0}
(c_j=0)\right)  \wedge \left( \wedge_{r, \gamma_{r7}<0}\ \vee_{j, \beta_{rj}>0}
(c_j=0)\right)\, .
\end{split}
\end{equation}
Vectors $\gamma_r$ that participate in this formula are the stoichiometric vectors of
reversible reactions ($r=1,2,3,4,5,9,18,19,20$). From the Table~\ref{TableVlachos} we
find that
 $\gamma_{r5}>0$ for $r= 2, 3, 4$,  $\gamma_{r5}<0$ for $r= 5, 19$,
 $\gamma_{r6}>0$ for  $r= 9$, $\gamma_{r6}<0$ for $r= 3,4$,
 $\gamma_{r7}>0$ for $r= 5, 19,20$, and $\gamma_{r7}\not<0$ for all $r$.
Formula (\ref{mathfrak{S}5,6,7)}) transforms into
\begin{equation*}
\begin{split}
&(c_5=0)\wedge ((c_1=0) \!\vee\! (c_3=0))\wedge((c_3=0) \!\vee\!(c_6=0))
 \wedge ((c_1=0) \!\vee\!(c_6=0)) \wedge (c_7=0) \\
  \wedge & ((c_1=0) \!\vee\!(c_7=0))
\wedge (c_6=0)\wedge(c_3=0)\wedge((c_2=0) \!\vee\!(c_5=0))
\wedge ((c_3=0) \!\vee\! (c_5=0)) \\
 \wedge & (c_7=0)\wedge((c_2=0) \!\vee\!(c_5=0))
\wedge  ((c_5=0) \!\vee\!(c_8=0))\wedge((c_3=0) \!\vee\!(c_8=0))\, .
\end{split}
\end{equation*}
After simple transformations it becomes
\begin{equation}\label{mathfrak5,6,7simple}
(c_3=0)\wedge(c_5=0)\wedge(c_6=0) \wedge(c_7=0)\, .
\end{equation}
Therefore, $\mathfrak{S}(\{5,6,7\})=\{3,5,6,7\}$. To iterate, we have to compute
$\mathfrak{S}(\{3,5,6,7\})$. For this calculation, we have to add one more line to
formula (\ref{mathfrak{S}5,6,7)}), namely,
\begin{equation*}
\wedge (c_3=0)\wedge \left( \wedge_{r, \gamma_{r3}>0}\vee_{j, \alpha_{rj}>0}
(c_j=0)\right) \wedge \left( \wedge_{r, \gamma_{r3}<0} \vee_{j, \beta_{rj}>0}
(c_j=0)\right)\, .
\end{equation*}
Let us take into account that
 $\gamma_{r3}>0$ for $r= 1, 4, 18$ and  $\gamma_{r3}<0$ for $r= 2,3,9,20$, and rewrite
 this formula in the more explicit form
\begin{equation*}
\begin{split}
&(c_3=0)\wedge((c_1=0)\vee(c_2=0))  \wedge ((c_1=0)\vee(c_6=0))\wedge(c_8=0) \\
 \wedge &((c_4=0)\vee(c_5=0))\wedge((c_2=0)\vee(c_5=0)) \wedge ((c_4=0)\vee(c_6=0))\wedge(c_7=0)\, .
\end{split}
\end{equation*}
Let us take the conjunction of this formula with (\ref{mathfrak{S}5,6,7)}) taken in the
simplified equivalent form (\ref{mathfrak5,6,7simple}) and transform the result to the
disjunctive form. We get
\begin{equation}
\begin{split}
[&(c_3=0)\wedge(c_5=0)\wedge(c_6=0) \wedge(c_7=0)\wedge(c_8=0)\wedge(c_1=0)]\\
\vee [&(c_3=0)\wedge(c_5=0)\wedge(c_6=0)
\wedge(c_7=0)\wedge(c_8=0)\wedge(c_2=0))]\, .
\end{split}
\end{equation}
This means that
$\mathfrak{S}^2(\{5,6,7\})=\mathfrak{S}(\{3,5,6,7\})=\{\{1,3,5,6,7,8\},\{2,3,5,6,7,8\}\}$.
The further calculations show that the next iteration does not change the result.
Therefore, all the $\omega$-limit points belong to two faces, $F_{\{1,3,5,6,7,8\}}$ and
$F_{\{2,3,5,6,7,8\}}$. The result is the same as for the previous discussion. The
detailed formalization becomes crucial for more complex systems and for software
development.

Let us find the vector of exponents $\delta=(\delta_i)$ ($i=1,\ldots, 8$) from the
Table~\ref{TableElimination}. After all the eliminations, the corresponding linear
functional $\hat{\delta}$ is just a value of the 7th coordinate: $\hat{\delta}(x)=x_7$.
Its values are negative ($-2$) for all irreversible reactions and zero for all reversible
reactions (see the last column of the Table~\ref{TableElimination}). The conditions
$(\delta,\gamma)=0$ for the reversible reactions and $(\delta,\gamma)<0$ for all
irreversible reactions do not define the unique vector: if $\delta$ satisfies these
conditions then its linear combination with the vectors of atomic balances also satisfy
them. Such a combination is a vector
\begin{equation}\label{GaugeInv}
\lambda \delta+\lambda_{\rm H}(2,0,1,2,1,0,1,2)+\lambda_{\rm
O}(0,2,1,1,0,1,2,2)\,
\end{equation}
under condition $\lambda>0$. This transformation of $\delta$ does not change
the signs of $\hat{\delta}$ on the stoichiometric vectors because of atomic
balances.

In our case  the only coordinate remains not eliminated, $x_7$ (the bottom part
of the last column of the Table~\ref{TableElimination}). If, for some reaction
mechanism and  selected sets  of reversible and irreversible reaction, there
remain several ($q$) coordinates, then it is necessary to find $q$
corresponding functionals $\hat{\delta}$ and the space of possible vectors of
exponents is ($q+j$)-dimensional. Here, $j$ is the number of the independent
linear conservation laws for the whole system, $j=n-{\rm rank}\{\gamma_r\}$,
$n$ is the number of the components, $\{\gamma_r\}$ includes all the
stoichiometric vectors for reversible and irreversible reactions.

To find $\delta$, we apply the elimination procedures from the
Table~\ref{TableElimination} to an arbitrary vector $y=(y_i)$ ($i=1, \ldots,
8$):
\begin{equation}
\begin{split}
&(y_1,y_2,y_3,y_4,y_5,y_6,y_7,y_8) \mapsto (y_1,y_2,0,y_4,y_5,y_6,y_7,y_8) \\
\mapsto &(y_1,y_2,0,y_4,y_5,y_6,y_7,0)\mapsto (y_1,y_2,0,0,y_5-y_4,y_6,y_7,0)\\
\mapsto &(y_1,y_2,0,0,0,y_6+y_5-y_4,y_7,0) \mapsto
(y_1,y_2,0,0,0,0,y_7+y_6+y_5-y_4,0) \, .
\end{split}
\end{equation}
This sequence of transformations gives us the linear functional
$$\hat{\delta}(y)=y_7+y_6+y_5-y_4\, .$$ The corresponding vector of exponents $(0,0,0,-1,1,1,1,0)$ should be
corrected because its coordinates cannot be negative. Let us apply
(\ref{GaugeInv}) with $\lambda=2$ (for convenience).  The coordinates of this
combination are non-negative if and only if $\lambda_{\rm H} \geq 0$,
$\lambda_{\rm O}\geq 0$ and $2\lambda_{\rm H}+\lambda_{\rm O}-2\geq 0$. The
solutions of these linear inequality on the $(\lambda_{\rm H},\lambda_{\rm O})$
plane is a convex combination of the extreme points (corners) $(1,0)$ and
$(0,2)$ plus any non-negative 2D vector: $(\lambda_{\rm H},\lambda_{\rm
O})=\varsigma (1,0) +(1-\varsigma)(0,2)+(\vartheta_1,\vartheta_2)$,
$\vartheta_{1,2}\geq 0$ and $1\geq \varsigma \geq 0$. The corresponding vectors
of exponents are
\begin{equation*}
(0,0,0,-2,2,2,2,0)+(\varsigma+\vartheta_1) (2,0,1,2,1,0,1,2)
+(1-\varsigma+\vartheta_2)(0,4,2,2,0,2,4,4)\, .
\end{equation*}
At least one of the exponents should be zero. There are only three
possibilities, $\delta_{1}$, $\delta_{2}$ or $\delta_{4}$. For all other $i$,
$\delta_i>0$ if $\vartheta_{1,2}\geq 0$ and $1\geq \varsigma \geq 0$.

To provide any necessary atomic balance in the limit $\varepsilon \to 0$ it is
necessary that two of $\delta_i$ are zeros. If $b_{\rm O} \leq
\frac{1}{2}b_{\rm H}$, then  $\delta_1=\delta_4=0$. This means that
$\vartheta_{1,2}= 0$, $\varsigma = 0$ and $\delta=(0,4,2,0,2,4,6,4)$. It is
convenient to divide this $\delta$ by 2 and write
$$\delta=(0,2,2,0,2,2,3,2)\, .$$
For these exponents, the equilibrium concentrations tend to 0 with the small parameter
$\varepsilon\to 0$ ($\varepsilon > 0$) as
\begin{equation}
\begin{split}
&c^{\rm eq}_{{\rm H}_2}=c^{\rm eq}_1=const, \,
 c^{\rm eq}_{{\rm O}_2}=c^{\rm eq}_2\sim\varepsilon^2, \,
 c^{\rm eq}_{{\rm OH}}=c^{\rm eq}_3\sim\varepsilon^2, \,
 c^{\rm eq}_{{\rm H}_2{\rm O}}=c^{\rm eq}_4 = const,\, \\
 &c^{\rm eq}_{{\rm H}}=c^{\rm eq}_5\sim\varepsilon^2, \,
 c^{\rm eq}_{{\rm O}}=c^{\rm eq}_6\sim\varepsilon^2, \,
 c^{\rm eq}_{{\rm H}{\rm O}_2}=c^{\rm eq}_7\sim \varepsilon^3, \,
 c^{\rm eq}_{{\rm H}_2{\rm O}_2}=c^{\rm eq}_6\sim \varepsilon^2 \, .
\end{split}
\end{equation}

If $b_{\rm O} \geq \frac{1}{2}b_{\rm H}$, then $\delta_2=\delta_4=0$. This means that
$\vartheta_{1,2}= 0$, $\varsigma = 1$ and
$$\delta=(2,0,1,0,3,2,3,2)\, .$$
For these exponents, the equilibrium concentrations tend to 0 with the small parameter
$\varepsilon\to 0$ ($\varepsilon > 0$) as
\begin{equation}
\begin{split}
&c^{\rm eq}_{{\rm H}_2}=c^{\rm eq}_1\sim\varepsilon^2,\,
 c^{\rm eq}_{{\rm O}_2}=c^{\rm eq}_2=const, \,
 c^{\rm eq}_{{\rm OH}}=c^{\rm eq}_3\sim\varepsilon, \,
 c^{\rm eq}_{{\rm H}_2{\rm O}}=c^{\rm eq}_4 = const, \, \\
 &c^{\rm eq}_{{\rm H}}=c^{\rm eq}_5\sim\varepsilon^3, \,
 c^{\rm eq}_{{\rm O}}=c^{\rm eq}_6\sim\varepsilon^2, \,
 c^{\rm eq}_{{\rm H}{\rm O}_2}=c^{\rm eq}_7\sim \varepsilon^3, \,
 c^{\rm eq}_{{\rm H}_2{\rm O}_2}=c^{\rm eq}_6\sim \varepsilon^2 \, .
\end{split}
\end{equation}

The linear combination $\sum_i \delta_i N_i$ decreases in time due to kinetic equations.
This is true for any vector of exponents presented by a linear combination
(\ref{GaugeInv}) ($\lambda\neq 0$) of the initial vector $(0,0,0,-1,1,1,1,0)$ with the
vectors of the atomic balances. At the same time, any of these combinations give an
additional linear conservation law for the system of reversible reactions.

Below are several versions of this function:
\begin{itemize}
\item{The initial version, $\hat{\delta}$, obtained from the
    Table~\ref{TableElimination} is $(\delta,N)=-N_{{\rm H}_2 {\rm 0}}+N_{\rm
    H}+N_{\rm O}+N_{\rm H} {{\rm O}_2}$;}
\item{Vector of exponents, calibrated by adding of the atomic balances
    (\ref{GaugeInv}) to meet the atomic balance conditions for $b_{\rm O} \leq
    \frac{1}{2}b_{\rm H}$ in the limit $\varepsilon \to 0$  is $(\delta,N)=2N_{{\rm
    O}_2} +2N_{\rm OH}+2N_{\rm H}+2N_{\rm O}+3N_{{\rm H} {\rm O}_2}+2N_{{\rm H}_2
    {\rm O}_2}$;}
\item{Vector of exponents, calibrated to meet the atomic balance conditions for
    $b_{\rm O} \geq \frac{1}{2}b_{\rm H}$ is, $(\delta,N)=2N_{{\rm H}_2} +N_{\rm
    OH}+3N_{\rm H}+2N_{\rm O}+3N_{{\rm H} {\rm O}_2}+2N_{{\rm H}_2 {\rm O}_2}$.}
\end{itemize}
All these forms differs by the combinations of the atomic balances
(\ref{GaugeInv}) and are, in this sense, equivalent.

\section{Conclusion \label{Sec:Conclus}}

The general principle of detailed balance was formulated in 1925 as follows
\cite{Lewis1925}: ``Corresponding to every individual process there is a reverse process,
and in a state of equilibrium the average rate of every process is equal to the average
rate of its reverse process."  Rigorously speaking, the chemical reactions have to be
considered as reversible ones, and every step of the complex reaction consists of two
reactions, forward and reverse (backward) one. However, in reality the rates of some
forward or reverse reactions may be negligible. Typically, the complex combustion
reactions, in particular, reactions of hydrocarbon oxidation or hydrogen combustion,
include both reversible and irreversible steps.  It is a case in catalytic reactions as
well. In particular, many enzyme reactions are ``partially irreversible". Although many
catalytic reactions are globally irreversible, they always include some reversible steps,
in particular steps of adsorption of gases. Many enzyme reactions are also ``partially
irreversible".

This work aims to solve the {\em problem of the partially irreversible limit in chemical
thermodynamics} when some reactions become irreversible whereas some other reactions
remain reversible. The main results in this direction are
\begin{enumerate}
\item{Description of the multiscale limit of a system reversible reactions when some
    of equilibrium concentrations tend to zero (Sec.~\ref{Sec:Shifted}).}
\item{Extended principle of detailed balance for the systems with some irreversible
    reactions (Theorem~\ref{Theorem:ExPrDetBalDegenerat}).}
\item{The linear functional $G_{\delta}$ that decreases in time on solutions of the
    kinetic equations under the extended detailed balance conditions
    (Proposition~\ref{LyapunovIrrev} and Eq. (\ref{LinearDiss})).}
\item{The entropy production (or free energy dissipation) formulas for the reversible
    part of the reaction mechanism under the extended detailed balance conditions
    (Eqs. (\ref{DissClass}), (\ref{DissTanh})). }
\item{Description of the faces of the positive orthant which include the
    $\omega$-limit points in their relative interior and, therefore, description of
    limiting behavior in time (Theorem~\ref{LocationOmegaLimit}).}
\end{enumerate}

Did we solve the main problem and create the thermodynamic of the systems with some
irreversible reaction? The answer is: we solved this problem partially. We described the
limit behavior but we did not find the global Lyapunov function that captures relaxation
of both reversible and irreversible parts of the system. The good candidate is a linear
combination of the relevant classical thermodynamic potential and $G_{\delta}$ but we did
not find the coefficients. In that sense, the problem of the limit thermodynamics remains
open.

Nevertheless, one problem is solved ultimately and completely: {\em How to throw away
some reverse reactions without violation of thermodynamics and microscopic
reversibility?} The answer is: the convex hull of the stoichiometric vectors of the
irreversible reactions should not intersect with the linear span of the stoichiometric
vectors of the reversible reactions and the reaction rate constants of the remained
reversible reactions should satisfy the Wegscheider identities (\ref{WegscheiderLambda}).

The solution of this theoretical problem is important for the modeling of the chemical
reaction networks. This is because some of reactions are practically irreversible.
Removal of some reverse reaction from the reaction mechanism cannot be done independently
of the whole structure of the reaction network; the whole reaction mechanism should be
used in the decision making.

If the irreversible reactions are introduced correctly then we also know that the closed
system with this reaction mechanism goes to an equilibrium state. At this equilibrium,
all the reaction rates are zero: the irreversible reaction rates vanish and the rates of
the reversible reactions satisfy the principle of detailed balance. The limit equilibria
are situated on the faces of the positive orthant of concentrations and these faces are
described in the paper.

The oscillatory or chaotic attractors are impossible in closed systems which satisfy the
extended principle of detailed balance. This general statement can be considered as a
simple consequence of thermodynamics. It can be easily proved if the thermodynamic
Lyapunov functions (potentials) are given. However, the thermodynamic potentials have no
limits for the systems with some irreversible reactions and we do not know a priori any
general theorem that prohibits bifurcations at the zero values of some reaction rate
constants. In this paper we proved, in particular, that the emergency of nontrivial
attractors in systems with some irreversible reactions is impossible if they are the
limits of the reversible systems which satisfy the principle of detailed balance. In this
sense, the thermodynamic behavior is proven for the systems with some irreversible
reactions under the extended detailed balance conditions. Nevertheless, the general
problem of the thermodynamic potentials in this limit remains open.

\end{document}